\newtheorem{Lemma}{Lemma}
\newtheorem{Theorem}{Theorem}
\title{\LARGE{Progressive Feature Transmission for Split Inference\\ at the Wireless Edge}}
\author{
 Qiao~Lan, Qunsong~Zeng, Petar~Popovski, Deniz~G\"und\"uz, and Kaibin~Huang
 \thanks{Q. Lan, Q. Zeng, and K. Huang are with Department of Electrical and Electronic Engineering at The University of Hong Kong, Hong Kong (Email: \{qlan, qszeng, huangkb\}@eee.hku.hk).  D. G\"und\"uz is with Department of Electrical and Electronic Engineering at Imperial College London, London, UK (Email: d.gunduz@imperial.ac.uk). P. Popovski is with Department of Electronic Systems at Aalborg University, Aalborg, Denmark (Email: petarp@es.aau.dk). Corresponding author: K. Huang. }
 }
\newcommand{\removelatexerror}{\let\@latex@error\@gobble}
\begin{document}

\maketitle

\vspace{-12mm}
\begin{abstract}
In \textit{edge inference}, an edge server provides remote-inference services to edge devices. This requires the edge devices to upload  high-dimensional features of data samples over resource-constrained  wireless channels, which creates a communication bottleneck. The conventional solution of feature pruning requires that the device has access to the inference model, which is unavailable in the current scenario of split inference. To address this issue, we propose the \emph{progressive feature transmission} (ProgressFTX) protocol, which minimizes the overhead by progressively transmitting  features until  a target confidence level  is reached. The optimal control policy of the protocol to accelerate inference is derived  and it comprises two key operations. The first is  \emph{importance-aware feature selection} at the server, {for which it is shown to be optimal to select the most important features,} characterized by the largest  discriminant gains of the corresponding  feature  dimensions. The second is \emph{transmission-termination  control} by the server for which the optimal policy is shown to exhibit a threshold  structure. Specifically, the transmission is stopped when the incremental uncertainty  reduction by further feature transmission  is outweighed by its  communication cost. The indices of the selected features and transmission decision are fed back to the device in each slot. The optimal policy is first derived for the tractable case of linear classification and then extended to the more complex case of classification using a \emph{convolutional neural network}. Both Gaussian and fading channels are considered. Experimental results are obtained for both a statistical data model and a real dataset. It is seen that ProgressFTX can substantially  reduce the communication latency compared to conventional  feature pruning and random feature transmission.

\end{abstract}

\vspace{-4mm}

\section{Introduction}

Recent years have witnessed the extensive deployment of \emph{artificial intelligence} (AI) technologies at the network edge to gain fast access to and processing of mobile data. This gives rise to two active research challenges: (1) \emph{edge learning}~\cite{CMZ2021Arxiv,GX2020CM}, where data are used to train large-scale AI models via distributed machine learning; and (2) \emph{edge inference}~\cite{CMZ2021Arxiv,Wang2020Tutorial}, which is the theme of this work and deals with operating of such models at edge servers to provide remote-inference services that enable emerging mobile applications, such as e-commerce or smart cities. For instance, a large-scale remote classifier (e.g., Google or Tencent Cloud) can power  a device to  discriminate hundreds of object classes. To reduce communication overhead and protect data privacy, a typical edge-inference algorithm involves a device extracting features from raw  data and transmitting them  to an edge server for inference. To improve the communication efficiency, we propose and optimize a simple protocol, termed progressive feature transmission (ProgressFTX). 

The state-of-the-art edge learning algorithms build upon an architecture termed \emph{split inference}~\cite{Zhou2020IoTJ,Niu2019Infocom,Chen2020TWC,Zhang2020CM,Zhang2020ICC,Deniz2020SPAWC,Deniz2021JSAC}, {in which the model is partitioned into \emph{device} and \emph{server} sub-models}~\cite{Niu2019Infocom,Zhang2020CM}. Using the device sub-model, an edge device extracts features from a raw data sample and uploads them to a server, which then uses  these features to compute an inference result and sends it back to the device. The device-server splitting of the computation load can be adjusted by shifting the splitting point of the model. It is proposed in~\cite{Chen2020TWC} to adapt the splitting point to the communication rate so as to meet a latency requirement. Split inference can support adaptive model selection via compression of a full-size \emph{deep neural network} (DNN) model into numerous reduced versions~\cite{Niu2019Infocom}.  The hostility of the wireless link between the device and server introduces a communication bottleneck for the split inference.  This issue is addressed by the framework of joint source-and-channel coding developed  over a series of works~\cite{Zhang2020CM,Zhang2020ICC,Deniz2020SPAWC,Deniz2021JSAC}, featuring the use of an autoencoder pair of encoder and decoder as device and server sub-models, jointly trained to simultaneously perform inference and efficient transmission. In view of prior works, there is a lack of a communication-efficient solution that can adapt to a practical time-varying channel. Particularly, with the exception of~\cite{Deniz2020SPAWC,Deniz2021JSAC}, the aforementioned joint source-and-channel coding approaches assume a static channel and require model retraining whenever the propagation environment changes significantly. On the other hand, the joint source-and-channel coding schemes addressing fading channels are applicable to analog transmission only~\cite{Deniz2020SPAWC,Deniz2021JSAC}. While feature transmission, as opposed to raw data uploading, leads to substantial overhead reduction, the communication bottleneck still exists due to the high dimensionality of features required for satisfactory inference performance~\cite{Xu2018NatureElectron}. For instance, GoogLeNet, a celebrated \emph{convolutional neural network} (CNN) model, generates 256 feature maps at the output of a convolutional layer, resulting in a total of $2\times 10^{6}$ real coefficients~\cite{Szegedy2015CVPR}. One solution for overcoming the communication bottleneck is to prune features according to their heterogeneous importance levels~\cite{Zhang2020CM,Deniz2020SPAWC,Guo2020AAAI}. There are several representative approaches for importance-aware feature pruning in DNN models. The first approach is to evaluate the importance of a feature by observing the effect of its removal on the inference performance~\cite{Guo2020AAAI,Zhang2018NeurIPS,Molchanov2019CVPR}. The second approach devises  explicit importance measures, including the sum of cross-entropy loss and reconstruction error~\cite{Zhang2018NeurIPS}, Taylor expansion of the error induced by pruning~\cite{Molchanov2019CVPR}, and symmetric divergence~\cite{Saon00NeurIPS}. These algorithms outperform the na\"{i}ve approach that assigns  a higher importance to a larger parametric magnitude. The last approach, known as learning-driven coding, relies on an auto-encoder to intelligently identify and encode discriminant features to improve the communication efficiency~\cite{Zhang2020CM,Deniz2021JSAC}. The proposed ProgressFTX protocol targets split inference and aims at achieving a higher  efficiency  than the existing one-shot feature selection/pruning by exploiting, besides importance awareness, the stochastic control according to the channel state. 

Here we consider the scenario in which split inference is to be deployed for classification tasks. For a given data sample, its classification accuracy improves as more features are sent to the server. This fact naturally motivates  ProgressFTX to reduce the communication overhead. Specifically, based on the protocol, a device progressively transmits selected subsets of features until a target classification accuracy is reached, as informed by the server, or the expected communication cost becomes too high. The principle of progressive transmission also underpins \emph{automatic repeat request} (ARQ), a basic mechanism for reliability in wireless networks.  The reliability  is achieved by repeating the transmission of a data packet until it is successfully received.  Among the established communication protocols, ProgressFTX is most similar to \emph{hybrid ARQ} (HARQ) with incremental redundancy~\cite{Lin1982TCOM}. Integrated with forward error correction, HARQ sequentially transmits multiple punctured versions of the encoder output for the same input data packet. Upon retransmission, the  reliability of the received packet is checked by combining all received versions and performing error detection on the combined result, thereby progressively improving the reliability. Despite the similarity, ProgressFTX differs from HARQ in  two main aspects. First, ProgressFTX is a cross-disciplinary design aiming at achieving both a high classification accuracy and low communication overhead, while HARQ targets communication reliability. Second, the incremental redundancy in ProgressFTX differs from that for HARQ and refers to the use of an increasing number of features to improve classification accuracy. For example, using $16$ feature maps  from the MNIST dataset leads to an accuracy of $93\%$ and additional $4$ feature maps boost the accuracy to $98\%$. As a result, the HARQ operations of punctured error control coding, combining,  and error detection are replaced with feature extraction, feature cascading, and classification, respectively, in the context of ProgressFTX. 

The contribution of this work is the  proposal of the ProgressFTX protocol and its optimal control. While a transmitted feature reduces the uncertainty in classification, its transmission incurs communication cost. This motivates the optimization of  ProgressFTX, which is formulated as  a problem of   optimal stochastic control with  dual objectives of minimizing both the uncertainty and communication cost. The problem is solved analytically for the case of linear classifier and the results are extended to a general model (e.g., CNN).

\begin{itemize}
	\item \emph{Optimal ProgressFTX for linear classifiers:}   The optimization problem  is decomposed into two sub-problems that reflect the two main operations at the server: optimal feature selection and optimal stopping.  The key idea to get tractable solutions is to approximate the commonly used entropy of posteriors by constructing a sufficiently tight upper bound as a convex function of the differential Mahalanoibis distance from a feature vector to each pair of class centroids of  the data distribution. With this approximation, the provably optimal strategy of the server is to follow the order of decreasing feature importance as measured by its entropy. The optimal stopping problem is solved for two channel types, Gaussian and fading. The optimal policies for both are derived in closed form and are observed to exhibit a simple threshold based structure. The threshold based polices  specify  criteria for stopping ProgressFTX when the incremental reward (i.e., uncertainty reduction) by further feature transmission is too small,  or the expected communication cost outweighs the reward.    

    \item \emph{Optimal ProgressFTX  for CNN classifiers}: The operations of importance-aware feature selection  and optimal stopping for CNN classifiers are more complex than the case of linear classification due to the complexity of CNNs, rendering an analytical approach intractable. We tackle the challenge by designing and numerically evaluating two practical algorithms. First, we evaluate the importance of a feature map using the gradients of associated model parameters  generated in the process of training the inference model. Second, we advocate the use of a low-complexity regression model trained to predict the incremental inference uncertainty. The model with close-to-optimal performance allows finding the optimal stopping time of ProgressFTX by a simple linear search.

\end{itemize}

Experiments using both statistical and real datasets have been conducted. The results demonstrate that ProgressFTX  with importance-aware feature selection and optimal stopping can substantially reduce the number of channel uses when benchmarked against the schemes of conventional one-shot feature compression or ProgressFTX with random feature selection.

\vspace{-4mm}

\section{Models and Metrics}
\label{sec: models_and_metrics}

We consider the edge inference system from Fig.~\ref{fig: system-diagram}, where local data at an edge device are compressed into features and sent to a server  to do a remote inference on a trained model. 

\vspace{-2mm}

\subsection{Data Distribution  Model}
\label{subsec: data_distribution_model}
While CNN-based classification targets a generic distribution, in the case of linear classification, for tractability, we assume the following data distribution. Each $M$-dimensional data sample is assumed to follow a \emph{Gaussian mixture} (GM) \cite{Hastie2009Elements} and is associated with  one of $L$ classes.  We assume that the $L$ classes have uniform priors.  Given the data distribution, the server computes the $N$-dimensional feature space with $N \leq M$ using   \emph{principal components analysis} (PCA) \cite{Hastie2009Elements}.  At the beginning of the edge-inference process from Fig.~\ref{fig: system-diagram}, the server sends the feature space to the device for the purpose of feature extraction: a feature vector is extracted from each sample by projection onto this feature space. This results in an arbitrary feature vector, denoted as $\mathbf{x} \in \mathbb{R}^{N} $, distributed as a GM in the feature space. Each class $\ell$ is a multivariate Gaussian distribution $\mathcal{N}(\mu_\ell,\mathbf{C}_\ell)$ with a centroid $\mu_\ell \in \mathbb{R}^{N}$ and a covariance matrix $\mathbf{C}_\ell \in \mathbb{R}^{N\times N}$. For tractability,  the covariance matrices are assumed identical:  $\mathbf{C}_\ell=\mathbf{C}, \ \forall \ell $, where $\mathbf{C}$ is diagonalized by PCA and known to the server. Then the \emph{probability density function} (PDF) of $\mathbf{x}$ can be written as
\begin{equation}
\label{eq: sysmodel_signal_pdf}
f(\mathbf{x}) = \frac{1}{L}\sum_{\ell=1}^{L} \mathcal{N}\left(\mathbf{x}|  \mu_\ell,  \mathbf{C}\right),
\end{equation}
where $\mathcal{N}\left(\mathbf{x}|\mu_\ell,\mathbf{C}\right)$ denotes the Gaussian PDF with mean $\mu_\ell $ and covariance $\mathbf{C}$.  

\begin{figure}[t]
\centering
\includegraphics[scale=0.18]{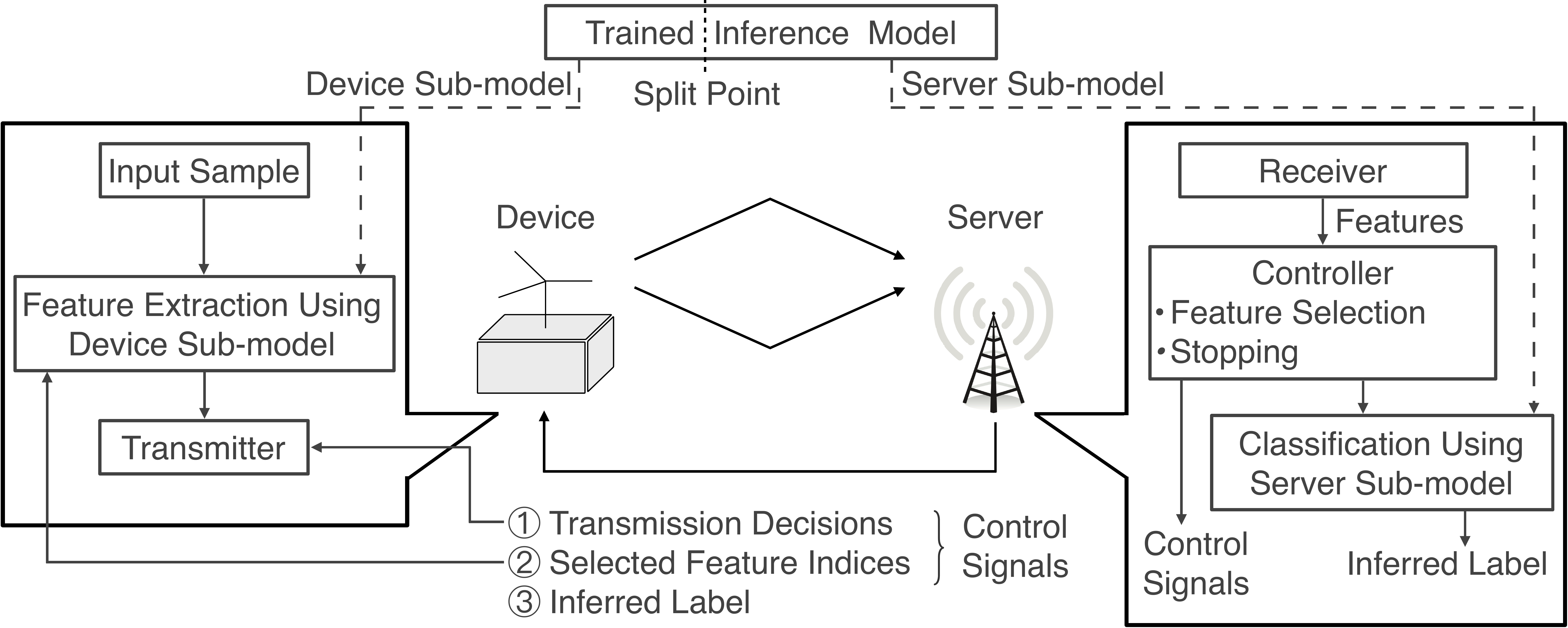}
\vspace{-6mm}
\caption{An edge inference system.}
\label{fig: system-diagram}\vspace{-4mm}
\end{figure}

\vspace{-2mm}

\subsection{Communication Model}
\label{subsec: comm_model}
Consider uplink transmission of feature vectors for remote inference at the server. Each feature is quantized with a sufficiently high resolution of $Q$ bits  such that quantization errors are negligible. We consider both the cases of linear and CNN classification that are elaborated in the next sub-section. In the former,  the  number of features that can be transmitted in one slot is given by $Y_0=\lfloor{\frac{R_0 T}{Q}}\rfloor$ for slot $k$. We refer to $Y_0$ as the transmission rate in features/slot. In the case of CNN classification, the basic unit of transmitted data is a \emph{feature map}, which is a $L_{\sf h}\times L_{\sf w}$ matrix. Correspondingly, the transmission rate (in feature-maps/slot) for a slot can be written as $Y_0 = \lfloor{\frac{R_0 T}{Q L_{\sf h}L_{\sf w}}}\rfloor$. The rounding effect is negligible by assuming a sufficiently high communication rate such that a relatively large number of features or feature maps can be transmitted within each slot (e.g. $>10$). For both linear and CNN classification, the number of transmitted features (or feature maps) is communication-rate dependent and thus varies from slot to slot.

The time of the communication channel is divided into slots, each spanning  $T$ seconds. The device is allocated a narrow-band channel with the bandwidth denoted as $B$ and the gain in slot $k$ as $g_k$. Two types of channels are considered. The first is Gaussian channel with a static  channel gain, where $g_k=g_0$ for all $k$ and is known to both the transmitter and the receiver. The transmit  \emph{signal-to-noise ratio} (SNR) is fixed as $\rho$ and the rate is matched to the channel to be $R_0 = B\log_2(1+\rho g_{0})$.  The second one is fading channel with outage where $g_k$ is an random variable and \emph{independent and identically distributed} (i.i.d.) over slots. Now the transmitter does not know the channel gains $\{g_k\}$, uses a fixed rate $R_0$ and an outage occurs if the channel gain falls  below  $g_0$~\cite{Andrews2005TIT}. Let the outage probability be denoted and defined as $p_{\sf o} = \Pr(g_k < g_0)$. As a final remark, note that the number of features that are available to the server is predictable with the Gaussian model, while it is unpredictable with the fading model. 

\vspace{-2mm}

\subsection{Two Classification  Models}
\label{subsec: classification_model}

\begin{figure}
\centering
\subfigure[Linear Model]{\includegraphics[scale=0.22]{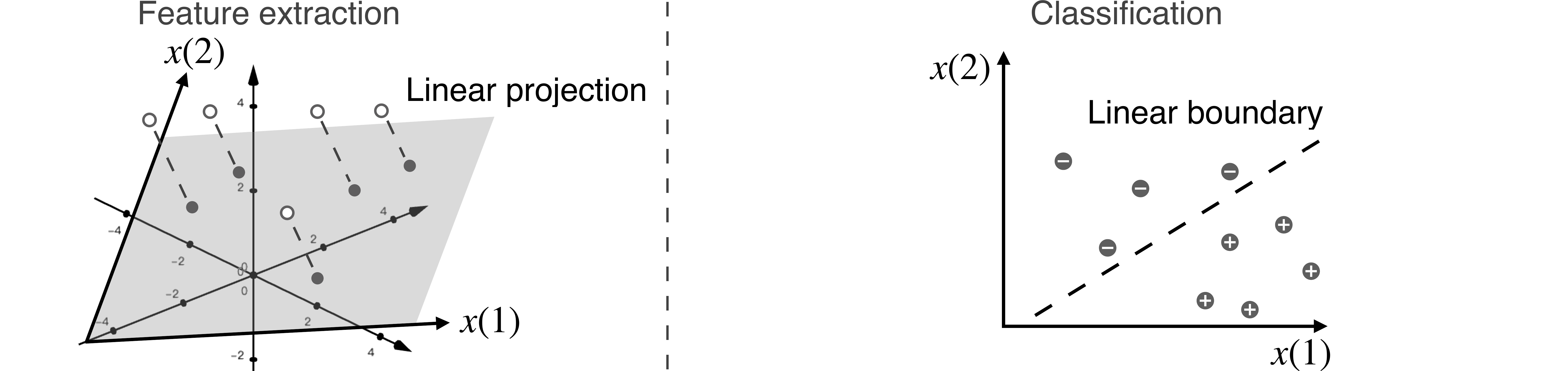}}
\subfigure[CNN Model]
{\includegraphics[scale=0.21]{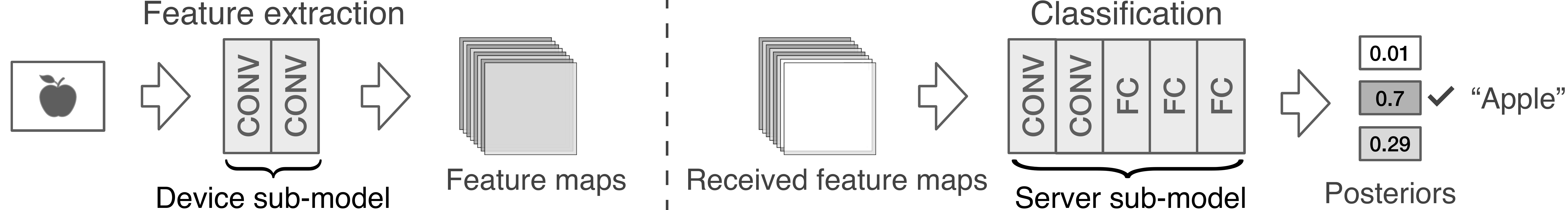}}
\vspace{-4mm}
\caption{Linear and CNN classification models for split inference.}
\label{fig: classification_model}\vspace{-6mm}
\end{figure}

\subsubsection{Linear Classification Model} Consider the progressive transmission of an arbitrary feature  vector $\mathbf{x}$ and let ${x}(n)$ denote the $n$-th feature (coefficient). Let $\mathcal{W}_k \subseteq \mathcal{W}$ denote the cumulative index set of features received by the server by the end of slot $k$, where  $\mathcal{W}=\{1,2,...,N\}$. The corresponding  \emph{partial} feature vector, denoted as $\mathbf{x}_{k}$,  can be defined by setting its coefficients as 
\begin{equation}
\label{eqn: x_k_specs}
{x}_{k}(n) = 
\begin{cases}
{x}(n), & n \in \mathcal{W}_k,\\
0, & \text{otherwise}.  
\end{cases}
\end{equation} 
In each slot, the server estimates the label of the object $\mathbf{x}$ by inputting the partial vector $\mathbf{x}_{k}$ into a classifier. We consider two classification models as described in the sequel. 

The \emph{maximum likelihood} (ML) classifier for the distribution~\eqref{eq: sysmodel_signal_pdf} is a linear one \cite{Hastie2009Elements}, and relies on a classification boundary between a class pair  being  a hyper-plane in the feature space (see Fig.~\ref{fig: classification_model}(a)). Due to uniform prior on the object classes, the ML classifier is equivalent to a \emph{maximum a posterior} (MAP) classifier and the estimated label  $\hat{\ell}$ is obtained as 
\begin{equation}
\label{eqn: linear_classifer}
\hat{\ell} = \mathop{\text{argmax}}\limits_{\ell} \ \Pr\left( \mathbf{x}_{k} | \ell,\mathcal{W}_k \right)
= \mathop{\text{argmax}}\limits_{\ell} \ \exp{\left(-\frac{1}{2}\sum_{n \in \mathcal{W}_k} \frac{\left[x_k(n)-\mu_{\ell}(n)\right]^2}{C_{n,n}}\right)}, 
\end{equation}
where $C_{n,n}$ denotes the $n$-th diagonal entry of the covariance matrix $\mathbf{C}$. The \emph{Mahalanobis distance} refers to the distance  from the sample $\textbf{x}_k$ to the centroid of class-$k$ in the $|\mathcal{W}_k|$-dimensional feature subspace. Given the ground-truth label for $\textbf{x}_k$ being $\ell$, let $z_{\ell}(k)$ denote the half squared Mahalanobis distance: $z_{\ell}(k) \triangleq \frac{1}{2} \sum_{n \in \mathcal{W}_k} \frac{\left[x_k(n)-\mu_{\ell}(n)\right]^2}{C_{n,n}}$. Then the problem of linear classification reduces to one of  Mahalanobis distance minimization: $\hat{\ell} = \mathop{\text{argmin}}_{\ell} \ z_{\ell}(k)$.

\subsubsection{CNN Classification Model}
A CNN model comprises multiple \emph{convolutional} (CONV) layers followed by multiple \emph{fully-connected} (FC) layers. To implement split inference, the model is divided into device and server sub-models as shown in Fig.~\ref{fig: classification_model} (b), which are represented by functions  $f_{\sf d}(\cdot)$ and  $f_{\sf s}(\cdot)$, respectively. Following existing designs in \cite{Zhang2018NeurIPS,Guo2020AAAI}, the   splitting  point is \emph{chosen} right after a CONV layer. Let $N$ denote the number of CONV filters in  the last layer of $f_{\sf d}(\cdot)$, each of which outputs a feature map with height $L_{\sf h}$ and width $L_{\sf w}$. The tensor of all extracted feature maps from  an arbitrary input sample  is denoted as $\mathbf{X} \in \mathbb{R}^{N \times L_{\sf h} \times L_{\sf w}}$, and the $n$-th map as $\mathbf{X}(n)$. Let $\mathbf{X}_k$ denote the tensor of cumulative  feature maps received by the server by slot $k$, and $\mathcal{W}_k$ the corresponding index set of feature maps. Then $\mathbf{X}_k$ can be related to $\mathbf{X}$ as
\begin{equation}
\mathbf{X}_k(n)=
\begin{cases}
\mathbf{X}(n), & n \in \mathcal{W}_k,\\
\mathbf{0},  & \text{otherwise}.
\end{cases}
\end{equation}
We define the system state for this generic model as $\theta_k=(\mathbf{X}_k,\mathcal{W}_k)$. In slot $k$, the server sub-model can compute  the  posteriors  $\Pr(\ell|\mathbf{X}_k)$ for the input $\mathbf{X}_k$  using the  forward-propagation algorithm~\cite{Wang2020Tutorial}. Then, the label $\hat{\ell}$ is estimated by posterior maximization: $\hat{\ell} = \mathop{\text{argmax}}_{\ell} \ \Pr(\ell|\mathbf{X}_k)$.

\vspace{-2mm}

\subsection{Classification  Metrics}
\label{subsec: classification_metrics}
For the tractable case of  linear classification, relevant metrics are characterized as follows. 
\subsubsection{Inference Uncertainty}
Inference uncertainty is measured using the  \emph{entropy of posteriors} commonly used  for ML and DNN classifiers \cite{Liu2020TWC,Liu2020TCCN}. Given the feature distribution in~\eqref{eq: sysmodel_signal_pdf}, the metric, denoted as  $H\left(\mathbf{x}_{k}\right)$,  for linear classification of  the partial feature vector $\mathbf{x}_{k}$ is given as
\begin{align}
H(\mathbf{x}_{k}) 
&\triangleq -\sum_{\ell=1}^{L} \Pr( \ell | \mathbf{x}_{k},\mathcal{W}_k) \log \Pr( \ell | \mathbf{x}_{k},\mathcal{W}_k)=  \log\left(\sum_{\ell=1}^{L}e^{-z_{\ell}(k)}\right) + \frac{\sum_{\ell=1}^{L}z_{\ell}(k)e^{-z_{\ell}(k)}}{\sum_{\ell=1}^{L}e^{-z_{\ell}(k)}}. \label{eqn: linear_entropy}
\end{align}

\subsubsection{Discriminant Gain}
\label{subsubsec: feature_importance}
Pairwise discriminant gain measures the discernibility between two classes in a subspace of the feature space. We adopt the well known  metric of  \emph{symmetric Kullback-Leibler (KL) divergence} defined as follows \cite{Saon00NeurIPS}. To this end, consider a class pair, say class $\ell$ and $\ell^{\prime}$, and a feature subspace spanning  the dimensions indexed by the  set ${\mathcal{S}} = \{n_1, n_2, ..., n_{|\mathcal{S}|}\} \subseteq \mathcal{W}$. Let  $G_{\ell,\ell^{\prime}}({\mathcal{S}})$ denote the corresponding  pairwise  discriminant gain. Based on the discussed data distribution model, the feature vector in this subspace is a uniform mixture of $L$ Gaussian components with mean $\tilde{\mu_{\ell}}$ and covariance matrix $\tilde{\mathbf C}$ ($\ell=1,2,...,L$), where $\tilde{\mu}_{\ell}(i)=\mu_{\ell}(n_i)$ and
\begin{equation}
\tilde{C}_{i,j} = 
\begin{cases}
C_{n_i,n_i}, & i=j,\\
0, &\text{otherwise}.
\end{cases}
\end{equation}
Based on the metric of  symmetric KL divergence, pairwise discriminant gain is evaluated as, 
\begin{align}
G_{\ell,\ell^{\prime}}\left({\mathcal{S}}\right) 
&\triangleq { {\sf{KL}}\left(\mathcal{N}(\tilde{\mu}_\ell, \tilde{\mathbf{C}})\parallel\mathcal{N}(\tilde{\mu}_{\ell^{\prime}}, \tilde{\mathbf{C}})\right)  + {\sf{KL}}\left(\mathcal{N}(\tilde{\mu}_{\ell^{\prime}}, \tilde{\mathbf{C}})\parallel\mathcal{N}(\tilde{\mu}_{\ell}, \tilde{\mathbf{C}})\right)} \nonumber\\
&= {\left(\tilde{\mu}_{\ell}-\tilde{\mu}_{\ell^{\prime}}\right)^{T}\tilde{\mathbf{C}}^{-1}\left(\tilde{\mu}_{\ell}-\tilde{\mu}_{\ell^{\prime}}\right)} 
= \sum_{n \in \mathcal{S}}g_{\ell,{\ell^{\prime}}}\left(n\right),
\label{eqn: linear_pair_dgain}
\end{align}
where  $g_{\ell,{\ell^{\prime}}}\left(n\right)=\frac{\left[\mu_\ell(n)-\mu_{\ell^{\prime}}(n)\right]^2}{C_{n,n}}$. The term $g_{\ell,{\ell^{\prime}}}\left(n\right)$ is named the pairwise discriminant gain in dimension-$n$. We define the average discriminant gain over  all $\frac{L(L-1)}{2}$ class pairs as 
\begin{eqnarray}
\bar{G}({\mathcal{S}}) = \frac{2}{L(L-1)}\sum_{  \ell<{\ell^{\prime}}\leq L}G_{\ell,\ell^{\prime}}\left({\mathcal{S}}\right)=  \sum_{n \in \mathcal{S}}\bar{g}\left(n\right), \label{eqn: d_gain_additivity}
\end{eqnarray}
where $\bar{g}(n)=\frac{2}{L(L-1)}\sum_{ \ell<{\ell^{\prime}}\leq L }g_{\ell,{\ell^{\prime}}}\left(n\right)$ is the average discriminant gain of dimension-$n$. We refer to \eqref{eqn: d_gain_additivity} as  the \emph{additivity} of discriminant gains. For binary classification, the subscripts of  $g_{1, 2}\left(n\right)$ and $G_{1, 2}\left({\mathcal{S}}\right) $ are omitted for simplicity, yielding $g\left(n\right)$ and $G\left({\mathcal{S}}\right)$.

\section{ProgressFTX Protocol and Control Problem}

\subsection{ProgressFTX Protocol}
\label{subsec: protocol}

\begin{figure}[t!]
\centering
\includegraphics[scale=0.22]{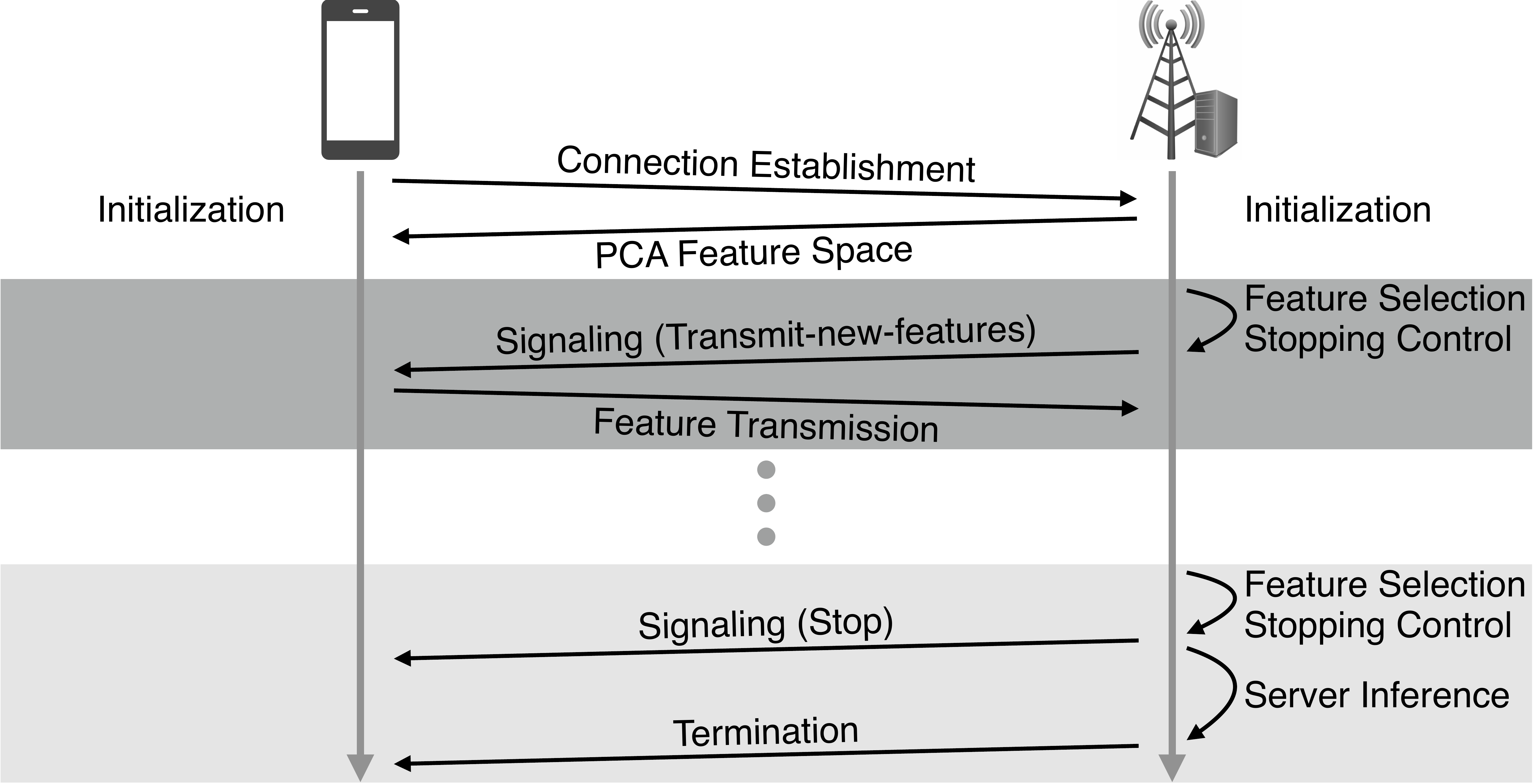}
\vspace{-5mm}
\caption{Illustration of the ProgressFTX protocol.  }
\label{fig: protocol}\vspace{-5mm}
\end{figure}

The ProgressFTX protocol implements a time window of continuous progressive feature transmission of a single data sample. If the remote inference fails to achieve the target accuracy, the task is declared a failure for a latency sensitive application (e.g., autonomous driving) or otherwise another instance of progressive transmission may be attempted after some random delay. The steps in the ProgressFTX protocol are illustrated in Fig.~\ref{fig: protocol} and described as follows. 
\begin{enumerate}
	\item \emph{Feature selection:} At the beginning of a slot intended for feature transmission, the server selects the feature dimensions and informs the device to transmit  corresponding features. For the case of linear classification, feature selection uses the importance metric of discriminant gain (see Section~\ref{subsec: classification_metrics}); for the other case of CNN classification,  the selection of feature maps depend on their importance levels (i.e., descent gradients as elaborated in Section~\ref{subsec: feature_selection_cnn}) which are available from the training phase and to be used in the ensuing phase of edge inference. Two facts should be emphasized: 1)  both methods are data-sample independent, which facilitates efficient implementation at the server  since it has no access to samples; and 2) feature importance is evaluated using a trained model, but not during the training. Specifically, the server records received features in $\mathbf{x}_{k}$ and their indices in $\mathcal{W}_k \subseteq \mathcal{W}$. The indices of selected features are stored in $\mathcal{S}_k$, where $\mathcal{S}_k \subseteq \mathcal{W}\setminus\mathcal{W}_k $.  A subset $\mathcal{S}_k$ satisfying this constraint is termed an  \emph{admissible subset}. Moreover, the number of features to be transmitted, $|\mathcal{S}_{k}|$, is limited by the number of untransmitted features  and the transmission rate: $\vert \mathcal{S}_{k} \vert=\min\{N- \vert\mathcal{W}_{k} \vert,Y_0\}$.

	\item \emph{Stopping control:} The number of features of a particular sample needed for remote classification to meet the accuracy requirement is sample dependent (for both linear and CNN classifications). Neither the device nor the server has prior knowledge of this number since each has access to either the sample or the model but not both. Online stopping control aims at minimizing the number of transmitted features under the said requirement. Its procedure is described as follows while its optimization problem is formulated in Section~\ref{sec: problem_formulation} and solved in Section~\ref{sec: linear_feature_stopping}.  Let $b_k$ indicate the server's decision on whether the device should transmit features in  slot $k$ (i.e., $b_k = 1$), or \emph{stop} the progressive transmission (i.e., $b_k = 0$). {The decision is made using a derived  policy for the case of linear classification (see Section~\ref{sec: linear_feature_stopping}) or a regression model for uncertainty prediction  (see Section~\ref{subsec: cnn_stopping}).} If the decision  is  to transmit, proceed to the next step; otherwise, go to Step  $6$. 
	
	\item \emph{Feedback:} The decisions on feature selection and stopping control are communicated to the device by feedback of one of three available  control signals described as follows: \emph{(a)} If the channel is not in outage in the preceding slot, the feedback contains a \emph{transmit-new-features signal} and indices of features in  $\mathcal{S}_k$ that instructs  the device to transmit the features selected by the server in the current slot. \emph{(b)} Otherwise, the feedback is a \emph{retransmission signal} instructing  the device to retransmit the features transmitted in the preceding  slot. In this case, a  link-layer HARQ protocol is employed to enable retransmission. \emph{(c)} If the server decides to stop the transmission (i.e., $b_{k}=0$), a \emph{stop signal} is fed back to instruct the device to terminate feature transmission.

	\item \emph{Feature transmission:} Upon receiving a transmit-new-features signal with indices $\mathcal{S}_k$,  the device transmits  the \emph{incremental} feature vector,  $\Delta\mathbf{x}_{k}$, that comprises the selected features:
	\begin{equation}
	\Delta\mathbf{x}_{k}=\left[{x}(n_1), {x}(n_2), ..., {x}(n_{|\mathcal{S}_k|})\right]^{T}, 
	\label{eqn: delta_x_k}
	\end{equation}
	where $n_1,...,n_{|\mathcal{S}_k|}$ are indices in $\mathcal{S}_k$.
	At the end of slot-$k$, the server updates the set of received features $\mathcal{W}_{k+1} = \mathcal{W}_{k} \bigcup \mathcal{S}_k$ and 
	the partial feature vector $\mathbf{x}_{k+1}  = \mathbf{x}_k + \mathbf{M}\Delta\mathbf{x}_{k}$, where the entry in the $i$-th row and $j$-th column of the placement matrix $\mathbf{M}$ is 
	\begin{equation}
	\label{eqn: update_mapping}
	M_{i,j} = 
	\begin{cases}
	1, & j \leq |\mathcal{S}_{k}| \text{ and } i = n_j,\\
	0, & \text{otherwise.}
	\end{cases}
	\end{equation}
	
	\item \emph{Server inference:} The server infers an estimated label $\hat{\ell}$ and its uncertainty   level using  the partial feature vector $\mathbf{x}_k$ and a trained model as discussed in  Section~\ref{subsec: classification_model}. Then we go to Step 1 and restart ProgressFTX for another data sample.
	
	\item \emph{Termination:} Upon receiving a \emph{stop signal}, the device terminates the process of progressive transmission. The latest estimated label is downloaded to the device.  
	
\end{enumerate}

In summary, feature selection, stopping control, feature transmission and uncertainty evaluation are executed sequentially in each slot. The transmission is terminated at the time when the uncertainty is evaluated to fall below the target or uncertainty reduction is outweighed by the corresponding communication cost.

\subsection{Control Problem Formulation}
\label{sec: problem_formulation}

ProgressFTX has two objectives: 1)  maximize the uncertainty reduction (or equivalently, improvement in inference accuracy), and 2) minimize the communication cost. The optimal stochastic control is formulated as a dynamic programming problem as follows. We define the system state  observed by the server at the beginning of slot $k$ as the following tuple
\begin{equation}
\label{eqn: def_system_state}
\theta_k \triangleq \left(\mathbf{x}_{k},\mathcal{W}_k\right).
\end{equation}
Recall that $Y_0$ represents the transmission rate in slot $k$, $\mathbf{x}_{k}$ the received  partial feature vector, and $\mathcal{W}_k$ the indices of received features. The control policy, denoted as $\Omega$,  maps  $\theta_k$ to the feature-selection \emph{action},  $\mathcal{S}_k$,  and stopping-control action, $b_k$, 
$\Omega: \theta_k \to \left(\mathcal{S}_k,b_k\right)$. The net reward of transitioning from  $\theta_{k}$ to $\theta_{k+1}$  is the decrease in  inference uncertainty  minus the communication cost, which can be written as 
\begin{equation}
u(\theta_k,\Omega)=H\left(\mathbf{x}_{k}\right) - H\left(\mathbf{x}_{k+1}\right) -b_k c_0,
\end{equation}
where $c_0$ denotes the transmission cost of one slot.
For sanity check, $u(\theta_k,\Omega)=0$ if  $b_k=0$. The net reward for slot $k$ requires server's evaluation based on transmitted features if the control decision is {proceeding transmission or retransmisson;  and thus,} is unknown before making the decision for the slot. Hence,  the expected net reward should be considered as the utility function. {Without loss of generality, consider} $K$ slots with the current slot set as slot $1$. Then the  system utility is defined as the  expectation of the sum rewards over $K$ slots conditioned on the system state and control policy:
\begin{equation}
\label{eqn: form_objective}
U(\theta_1,\Omega) \triangleq \mathbb{E}_{\{\theta_k\}_{k=2}^{K}}  \left[\sum_{k=1}^{K}u(\theta_k, \Omega)\ \Bigg\vert\ \theta_1, \Omega \right]. 
\end{equation}
The ProgressFTX control problem can be readily  formulated  as the following dynamic program: 
\begin{equation*}\text{(P1)}\quad
\begin{aligned}
\max\limits_{\Omega}\quad& U(\theta_1,\Omega) \\
\mathrm{s.t. }\quad& b_{k} \in \{0,1\},  \ k = 1, \ 2, \ ..., \ K, \\
& b_{k+1} \leq b_{k},  \ k = 1, \ 2, \ ..., \ K-1, \\
&    \vert \mathcal{S}_{k} \vert=\min\{N- \vert\mathcal{W}_{k} \vert,Y_0\}, \ k = 1, \ 2, \ ..., \ K, \\
& \mathcal{S}_k \subseteq \left(\mathcal{W}\setminus\mathcal{W}_k \right), \ k = 1, \ 2, \ ..., \ K .
\end{aligned}
\end{equation*}
The complexity of solving the problem using a conventional iterative algorithm (e.g., value iteration) is prohibitive due to the curse of dimensionality as the state space in~\eqref{eqn: form_objective} is not only high dimensional but also partially continuous. The difficulty is overcome in the sequel via analyzing the structure of the optimal policy.

\section{ProgressFTX Control -- Optimal Feature Selection}
\label{sec: linear_feature_selection}
In this section, targeting  linear classification, the optimal policy for feature selection in the ProgressFTX protocol is derived in closed form by {first approximating the objective function in Problem (P1),} and then solving the problem. 

\subsection{Approximation  of Expected Uncertainty}
\subsubsection{Binary Classification}
For ease of exposition, we first consider the simplest case of binary classification (i.e., $L = 2$) before extending the results to the  general case. {The goal of the subsequent analysis is to approximate the expected uncertainty function of unknown features to be transmitted in the following $K$ slots, given the features already received by the server, which is the objective of Problem (P1) in \eqref{eqn: form_objective}.} To begin with,  
the objective can be rewritten as 
\begin{equation}
U(\theta_1,\Omega) = 
\mathbb{E}_{\{\Delta\mathbf{x}_{k}\}_{k=1}^{K}}\left[ \sum_{k=1}^{K} H\left(\mathbf{x}_{k}\right) - H\left(\mathbf{x}_{k+1}\right)\ \Bigg\vert\ {\theta}_{1}, \Omega \right] - c_0\sum_{k=1}^{K}b_k.\nonumber 
\end{equation}
Based on the fact that feature distributions in different dimensions are independent {(as the covariance matrix is diagonal)}, telescoping gives 
\begin{equation}
U(\theta_1,\Omega) = H\left(\mathbf{x}_{1}\right) - \mathbb{E}_{\{\Delta\mathbf{x}_{k}\}_{k=1}^{K}}\left[ H\left(\mathbf{x}_{K+1}\right)\ \Big\vert\ {\theta}_{1},\Omega \right] - c_0\sum_{k=1}^{K}b_k .
\label{eqn: simplified_objective}
\end{equation}
In the remainder of the sub-section, we show that the expected uncertainty term in  $U(\theta_1,\Omega)$ can be reasonably approximated by  a monotone decreasing function of the average discriminant gain of selected features, which facilitates the subsequent feature-selection optimization.

Consider first the feature-selection decisions for transmission over $K$ slots specified by $\mathcal{S} =  \bigcup_{k=1}^{K} \mathcal{S}_{k}$. {Upon receiving the selected features $\Delta\mathbf{x}$, the partial feature vector  $\mathbf{x}_1$ is updated as $\mathbf{x}_{K+1} = \mathbf{x}_1+\mathbf{M}^{(K)}{\Delta \mathbf{x}}$, where the placement matrix $\mathbf{M}^{(K)}$,   defined similarly as $\mathbf{M}$ in~\eqref{eqn: update_mapping}, places all the features transmitted over $K$ slots as recorded in $\mathcal{S}$.} For an arbitrary feature vector $\mathbf{x}$,  define the differential-distance as  $\delta(\mathbf{x})=z_1(\mathbf{x})-z_2(\mathbf{x})$, where $z_1$ and $z_2$ represent the half squared Mahalanobis distances from $\mathbf{x}$ to the centroids of  classes $1$ and $2$, respectively. For ease of notation, let $\delta(\mathbf{x}_1)$ and $\delta(\mathbf{x}_{K+1})$ be denoted as $\delta_1$ and $\delta_{K+1}$, respectively. In particular, 
\begin{equation}
\delta_{K+1} = \frac{1}{2}\sum_{n\in(\mathcal{W}_1\bigcup\mathcal{S})} \frac{  [\mu_2(n)-\mu_1(n)][2x(n)-\mu_2(n)-\mu_1(n)]}{C_{n,n}}.
\label{eqn: delta-prime}
\end{equation}
Note that $\delta_{K+1}$ is a random variable from the server's perspective before receiving the selected features. Given the GM model~\eqref{eq: sysmodel_signal_pdf}, 
the distribution of $\delta_{K+1}$ is characterized next. 
\begin{Lemma}
\label{lemma: delta_distribution}\emph{
Given the partial feature vector  $\mathbf{x}_{1}$  and feature-selection decision $\mathcal{S}$, the differential distance $\delta_{K+1}$  is a mixture of two Gaussian components with the following conditional PDF:}
\begin{equation}
\label{eqn: analysis_binary_pdf_delta}
f_{\delta_{K+1}}\left(\delta  |  \mathbf{x}_1,\mathcal{S}\right) = \frac{1}{2}\left[\mathcal{N}\left(\delta \ \bigg|\ \delta_1+\frac{G(\mathcal{S})}{2},G(\mathcal{S})\right)+\mathcal{N}\left(\delta \ \bigg|\ \delta_1-\frac{G(\mathcal{S})}{2},G(\mathcal{S})\right)\right], 
\end{equation}
\emph{where $G(\mathcal{S})=G_{1,2}(\mathcal{S})$ is the total discriminant gain of features selected in $\mathcal{S}$ given in~\eqref{eqn: linear_pair_dgain}.} 
\end{Lemma}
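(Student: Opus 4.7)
The plan is to decompose $\delta_{K+1}$ into a part that is fully determined by the already-received features in $\mathbf{x}_1$ and a part whose randomness comes solely from the not-yet-received coefficients $\{x(n):n\in\mathcal{S}\}$, and then to characterize the latter by conditioning on the true class label and marginalizing using the uniform class prior from~\eqref{eq: sysmodel_signal_pdf}.

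First, I would split the sum over $n\in\mathcal{W}_1\cup\mathcal{S}$ in~\eqref{eqn: delta-prime} into the disjoint sums over $\mathcal{W}_1$ and $\mathcal{S}$. The sum over $\mathcal{W}_1$ depends only on the observed $\mathbf{x}_1$ and equals the constant $\delta_1$; the sum over $\mathcal{S}$ defines a random variable $\Delta$, so that $\delta_{K+1}=\delta_1+\Delta$. Because $\mathbf{C}$ is diagonal, the coefficients $\{x(n):n\in\mathcal{S}\}$ are, conditional on the class $\ell$, independent of $\{x(n):n\in\mathcal{W}_1\}$, so once I condition on $\ell$ the quantities $\delta_1$ and $\Delta$ decouple and it suffices to derive the class-conditional law of $\Delta$.

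Second, I would fix $\ell\in\{1,2\}$; then each $x(n)$ with $n\in\mathcal{S}$ is $\mathcal{N}(\mu_\ell(n),C_{n,n})$ independently. Since $\Delta$ is a linear combination of these $x(n)$, it is Gaussian conditional on $\ell$. A direct calculation, using $2\mu_\ell(n)-\mu_1(n)-\mu_2(n)=\pm[\mu_\ell(n)-\mu_{\ell'}(n)]$ with the sign determined by $\ell$, yields conditional mean $\mathbb{E}[\Delta\mid\ell]=\mp\tfrac{1}{2}G(\mathcal{S})$ and conditional variance $\mathrm{Var}[\Delta\mid\ell]=\sum_{n\in\mathcal{S}}[\mu_2(n)-\mu_1(n)]^2/C_{n,n}=G(\mathcal{S})$, where I recognise $G(\mathcal{S})=G_{1,2}(\mathcal{S})$ from~\eqref{eqn: linear_pair_dgain}. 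Shifting by the constant $\delta_1$ then gives $\delta_{K+1}\mid\ell=1\sim\mathcal{N}(\delta_1-G(\mathcal{S})/2,G(\mathcal{S}))$ and $\delta_{K+1}\mid\ell=2\sim\mathcal{N}(\delta_1+G(\mathcal{S})/2,G(\mathcal{S}))$.

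Finally, I would marginalize over $\ell$ using the uniform prior $\Pr(\ell)=1/2$ from~\eqref{eq: sysmodel_signal_pdf}; the resulting two-component mixture is exactly the density~\eqref{eqn: analysis_binary_pdf_delta}. The only subtlety I expect to be worth flagging is that in principle the conditional density of $\delta_{K+1}$ given $\mathbf{x}_1$ would use the posterior $\Pr(\ell\mid\mathbf{x}_1)$ rather than the prior $1/2$; the lemma clearly takes the prior $1/2$ as the mixing weight, so I would be explicit that the marginalization is taken with respect to the uniform class prior, which is the convention used throughout the paper and which is what makes the ensuing feature-selection analysis tractable.
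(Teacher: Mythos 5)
Your proof is correct, and since the paper states Lemma~\ref{lemma: delta_distribution} without providing any proof, there is nothing to diverge from: the decomposition $\delta_{K+1}=\delta_1+\Delta$, the class-conditional Gaussianity of the linear form $\Delta$ with mean $\mp G(\mathcal{S})/2$ and variance $G(\mathcal{S})$, and the marginalization over $\ell$ is exactly the intended argument, and your computations check out against \eqref{eqn: delta-prime} and \eqref{eqn: linear_pair_dgain}. Your flagged subtlety is genuine and worth keeping: conditioning on $\mathbf{x}_1$ should in principle weight the two components by the posteriors $\Pr(\ell\mid\mathbf{x}_1)\propto e^{-z_\ell(1)}$ rather than by $1/2$, so the stated PDF is exact only under the convention that the class label is marginalized with its uniform prior; making that convention explicit, as you do, is the right way to reconcile your derivation with the lemma as written.
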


Given a differential distance $\delta$,  the inference uncertainty $H\left(\mathbf{x}\right)$ in \eqref{eqn: linear_entropy} can be simplified as: 
\begin{equation}
H(\delta)
= \log\left(1+e^{-\delta}\right) + \frac{\delta}{e^{\delta}+1}.
\end{equation}
Then the expected  uncertainty resulting from the feature-selection decision $\mathcal{S}$  can be written as
\begin{eqnarray}
\mathbb{E}\left[H\left(\delta_{K+1}\right)  \Big\vert \ \theta_1,\mathcal{S} \right]=\int_{-\infty}^{\infty} H(\delta)f_{\delta_{K+1}}\left(\delta  | \mathbf{x}_1,\mathcal{S}\right) {\rm d} \delta \triangleq \bar{H}(\delta_1,G(\mathcal{S})) .
\label{eqn: binary_analysis_int_delta}
\end{eqnarray}
In view of the difficulty in deriving the above integral in closed form, we resort to its approximation using an upper bound as shown in the following lemmas. 
\begin{Lemma}
\label{lemma: ub_infer_uncertainty}
\emph{Given $\delta \in \mathbb{R}$, the inference uncertainty $H(\delta)$ can be upper bounded by
\begin{equation}
 \label{eqn: h_ub_definition}
 H^{\sf ub}(\delta) = (1+|\delta|)e^{-|\delta|}.
\end{equation}} 
\end{Lemma}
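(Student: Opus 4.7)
My plan is to exploit two elementary inequalities and a symmetry argument. First, I would verify that the uncertainty function is even in $\delta$, i.e.\ $H(\delta)=H(-\delta)$. A short calculation using $\log(1+e^{\delta})=\delta+\log(1+e^{-\delta})$ and $\tfrac{-\delta}{e^{-\delta}+1}=\delta\bigl(\tfrac{1}{e^{\delta}+1}-1\bigr)$ collapses $H(-\delta)$ back to $H(\delta)$. Since the proposed upper bound $H^{\sf ub}(\delta)=(1+|\delta|)e^{-|\delta|}$ is manifestly even in $\delta$, it suffices to establish the inequality on the half-line $\delta\ge 0$, where $|\delta|=\delta$.

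For $\delta\ge 0$, I would bound the two summands of $H(\delta)$ separately. The first summand satisfies $\log(1+e^{-\delta})\le e^{-\delta}$ by the standard inequality $\log(1+x)\le x$ valid for $x\ge 0$, applied to $x=e^{-\delta}$. For the second summand, I would write
\begin{equation*}
\frac{\delta}{e^{\delta}+1}\;\le\;\frac{\delta}{e^{\delta}}\;=\;\delta e^{-\delta},
\end{equation*}
which is immediate because dropping the $+1$ in the denominator only enlarges the expression for $\delta\ge 0$. Adding the two bounds yields
\begin{equation*}
H(\delta)\;\le\;e^{-\delta}+\delta e^{-\delta}\;=\;(1+\delta)e^{-\delta}\;=\;H^{\sf ub}(\delta),
\end{equation*}
which is exactly the claim on the non-negative half-line.

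Finally, I would invoke the symmetry from the first step to extend the inequality to $\delta<0$, replacing $\delta$ by $|\delta|$ throughout. Since every step reduces to classical scalar inequalities, I do not anticipate a genuine obstacle; the only subtle point is recognizing that $H$ is even, which is what allows the two-term decomposition to work uniformly on the whole real line via a single one-sided argument. If one preferred a more mechanical alternative, one could instead set $\phi(\delta)\triangleq H^{\sf ub}(\delta)-H(\delta)$, check $\phi(0)=1-\log 2>0$ and $\lim_{\delta\to\infty}\phi(\delta)=0$, and show $\phi'(\delta)\le 0$ for $\delta\ge 0$; but the two-inequality split above avoids this calculus and keeps the proof to a few lines.
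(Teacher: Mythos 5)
Your proof is correct, and it takes a genuinely different route from the paper's. The paper's argument is a one-line assertion about the ratio $H(\delta)/H^{\sf ub}(\delta)$, claiming it attains its maximum value $\log 2<1$ at $\delta=0$; as stated this is not even consistent with the tightness remark immediately following the lemma (the ratio tends to $1$ as $|\delta|\to\infty$, so $\log 2$ is in fact its \emph{minimum}, and the intended claim is only that the ratio stays below $1$), and no derivation is given. Your argument instead establishes the inequality constructively: after verifying that $H$ is even (your algebra checks out, since $\tfrac{1}{e^{-\delta}+1}=1-\tfrac{1}{e^{\delta}+1}$), you bound the two summands separately on $\delta\ge 0$ via $\log(1+x)\le x$ and $e^{\delta}+1\ge e^{\delta}$, and the two bounds sum exactly to $(1+\delta)e^{-\delta}$. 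This is more rigorous than the paper's proof and has the added virtue of explaining \emph{why} the bound is asymptotically tight: both elementary inequalities become equalities in the limit $\delta\to\infty$, which is precisely the observation the paper makes separately after the lemma. The only cost is the extra (easy) step of checking evenness, which the paper's ratio formulation sidesteps by working with $|\delta|$ implicitly.
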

\begin{proof}
The result follows from the fact that the ratio $\frac{H(\delta)}{H^{\sf ub}(\delta)}$ has the maximum value of $\log{2}$, which is smaller than one,  at  $\delta = 0$.
\end{proof}
For large $\delta$, the bound is tight since $\lim_{|\delta|\to\infty}\frac{H(\delta)}{H^{\sf ub}(\delta)}=1$.
Using Lemma~\ref{lemma: ub_infer_uncertainty}, we can bound the expected uncertainty as $\bar{H}(\delta_1,G(\mathcal{S}))\leq \bar{H}^{\sf ub}(\delta_1,G(\mathcal{S}))$ with 
\begin{equation}
 \bar{H}^{\sf ub}(\delta_1,G(\mathcal{S})) = \int_{-\infty}^{\infty}H^{\sf ub}(\delta)f_{\delta_{K+1}}\left(\delta | \mathbf{x}_1,\mathcal{S}\right) {\rm d} \delta,  
\label{eqn: ub_tight_binary}
\end{equation}
where  $H^{\sf ub}(\delta)$ is given in Lemma~\ref{lemma: ub_infer_uncertainty}. The above analysis allows approximating $\bar{H}(\delta_1,G(\mathcal{S}))$ by a simple exponential upper bound as shown in Lemma \ref{lemma: exponential_bound}, proven in Appendix~A. This  approximation is also  {validated} numerically in Fig.~\ref{fig: expected_inference_uncertainties_example}.
\begin{Lemma}
\label{lemma: exponential_bound}
\emph{The expected uncertainty can be  upper bounded by an exponential function as
\begin{equation}
   \label{eqn: tilde_H}
   \bar{H}(\delta_1,G(\mathcal{S}))  \leq \bar{H}^{\sf ub}(\delta_1,G(\mathcal{S})) \leq \tilde{H}(\delta_1,G(\mathcal{S})) \triangleq c_1 e^{-c_2 G(\mathcal{S})},
\end{equation}
where $c_1$ and $c_2$ are positive constants depending on $\delta_1$. } 
\end{Lemma}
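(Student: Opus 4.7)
The first inequality in~\eqref{eqn: tilde_H} is immediate: integrating the pointwise bound of Lemma~\ref{lemma: ub_infer_uncertainty} against the mixture density of $\delta_{K+1}$ supplied by Lemma~\ref{lemma: delta_distribution} gives $\bar{H}(\delta_1, G) \le \bar{H}^{\sf ub}(\delta_1, G)$, where we abbreviate $G = G(\mathcal{S})$. The real task is to establish the second inequality, $\bar{H}^{\sf ub}(\delta_1, G) \le c_1 e^{-c_2 G}$, and our plan has three steps.

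First, we would replace the polynomial-times-exponential envelope $(1+|\delta|)e^{-|\delta|}$ by a purely exponential majorant. A one-variable calculus check shows that for every $\beta \in (0,1)$, $(1+|\delta|)e^{-|\delta|} \le \tfrac{e^{-\beta}}{1-\beta}\, e^{-\beta|\delta|}$ uniformly in $\delta$ (a convenient choice being $\beta = 1/2$). This reduces the goal to bounding $\mathbb{E}\!\left[e^{-\beta|\delta_{K+1}|}\right]$, where by Lemma~\ref{lemma: delta_distribution} the random variable $\delta_{K+1}$ is a two-component Gaussian mixture with means $\mu_{\pm} = \delta_1 \pm G/2$ and common variance $G$.

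Second, we would treat each mixture component by splitting the integral at $\delta = 0$, writing the expectation under a single Gaussian of mean $\mu$ and variance $G$ as $\int_0^{\infty} e^{-\beta\delta}\mathcal{N}(\delta|\mu, G)\, d\delta + \int_{-\infty}^{0} e^{\beta\delta}\mathcal{N}(\delta|\mu, G)\, d\delta$. Completing the square in each integrand converts an exponentially tilted Gaussian into a shifted Gaussian with a prefactor of the form $e^{\mp\beta\mu + \beta^{2} G/2}$, and the two half-line integrals of the shifted Gaussians collapse to standard normal CDF values $\Phi(\cdot)$. After substituting $\mu = \delta_1 \pm G/2$ and repeating for the second mode, each mode contributes two terms: a \emph{benign} one whose prefactor already decays like $e^{-cG}$ and for which $\Phi \le 1$ finishes the job, and a \emph{dangerous} one carrying a growing prefactor $e^{cG}$ that has to be controlled.

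For the dangerous terms, the plan is to apply the Mills-ratio tail bound $\Phi(-x) \le \phi(x)/x$ for $x>0$: the Gaussian factor $e^{-x^{2}/2}$ inside $\phi(x)$ supplies a strictly larger negative exponent that absorbs the growing prefactor and leaves a net $e^{-cG}$ decay, up to a polynomial-in-$G$ correction. Summing over the two mixture modes then yields the advertised $c_1 e^{-c_2 G}$ bound, with $c_1$ and $c_2$ explicit in $\delta_1$. The main obstacle will be this last bookkeeping step, since the Mills-ratio application produces polynomial-in-$G$ prefactors and a residual $\delta_1^{2}/G$ term in the exponent, and we must confirm that the net exponent is linear in $G$ with a strictly negative coefficient. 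The standard remedy is to pick $c_2$ strictly below the ``natural'' rate $\beta^{2}/2$ (e.g., any $c_2 < 1/8$ for $\beta=1/2$), since any polynomial is dominated by an arbitrarily small additional exponential for large $G$; the crude bound $\bar{H}^{\sf ub} \le 1$ can be used to enlarge $c_1$ so that the inequality is also valid down to $G=0$.
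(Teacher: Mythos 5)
Your proposal is correct, but it follows a genuinely different route from the paper. The paper's proof keeps the full envelope $(1+|\delta|)e^{-|\delta|}$ inside the integral, evaluates $\bar{H}^{\sf ub}(\delta_1,G)$ exactly in terms of complementary error functions (citing standard Gaussian-integral results), and then extracts the precise asymptotic
$\bar{H}^{\sf ub}(\delta_1,G)\sim \tfrac{16}{9}\sqrt{2/\pi}\,\bigl(e^{\delta_1/2}+e^{-\delta_1/2}\bigr)G^{-1/2}e^{-G/8}$ as $G\to\infty$, from which the exponential bound is read off. You instead majorize the envelope by a pure exponential $\tfrac{e^{-\beta}}{1-\beta}e^{-\beta|\delta|}$ first, reduce everything to half-line Gaussian moment-generating integrals via completing the square, and control the terms with growing prefactors by the Mills ratio. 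What your route buys is a fully self-contained, elementary derivation with explicit constants $c_1(\delta_1)$, $c_2$, and a clean treatment of small $G$ via the crude bound $\bar{H}^{\sf ub}\leq 1$; this is arguably more rigorous than the paper's argument, which omits the erfc expression and concludes from the asymptotic scaling law alone. What the paper's route buys is the sharp rate: it exhibits the extra $G^{-1/2}$ factor and the exact limiting constant. Note also that your caution about having to take $c_2$ strictly below $\beta^2/2$ is unnecessary: for the dangerous terms the Mills-ratio factor $1/x\sim G^{-1/2}$ and the residual $e^{-\delta_1^2/(2G)}\leq 1$ both help rather than hurt, so with $\beta=1/2$ your bookkeeping in fact recovers the paper's rate $e^{-G/8}$ exactly (the exponents $\tfrac{\beta(1+\beta)}{2}G-\tfrac{(1/2+\beta)^2}{2}G=-\tfrac{G}{8}$ cancel to the same value as the benign terms' $-\tfrac{\beta(1-\beta)}{2}G$), so you may take $c_2=1/8$.
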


\begin{figure}[t]
\centering
\subfigure[$\delta_1=0$]{\includegraphics[height=6cm]{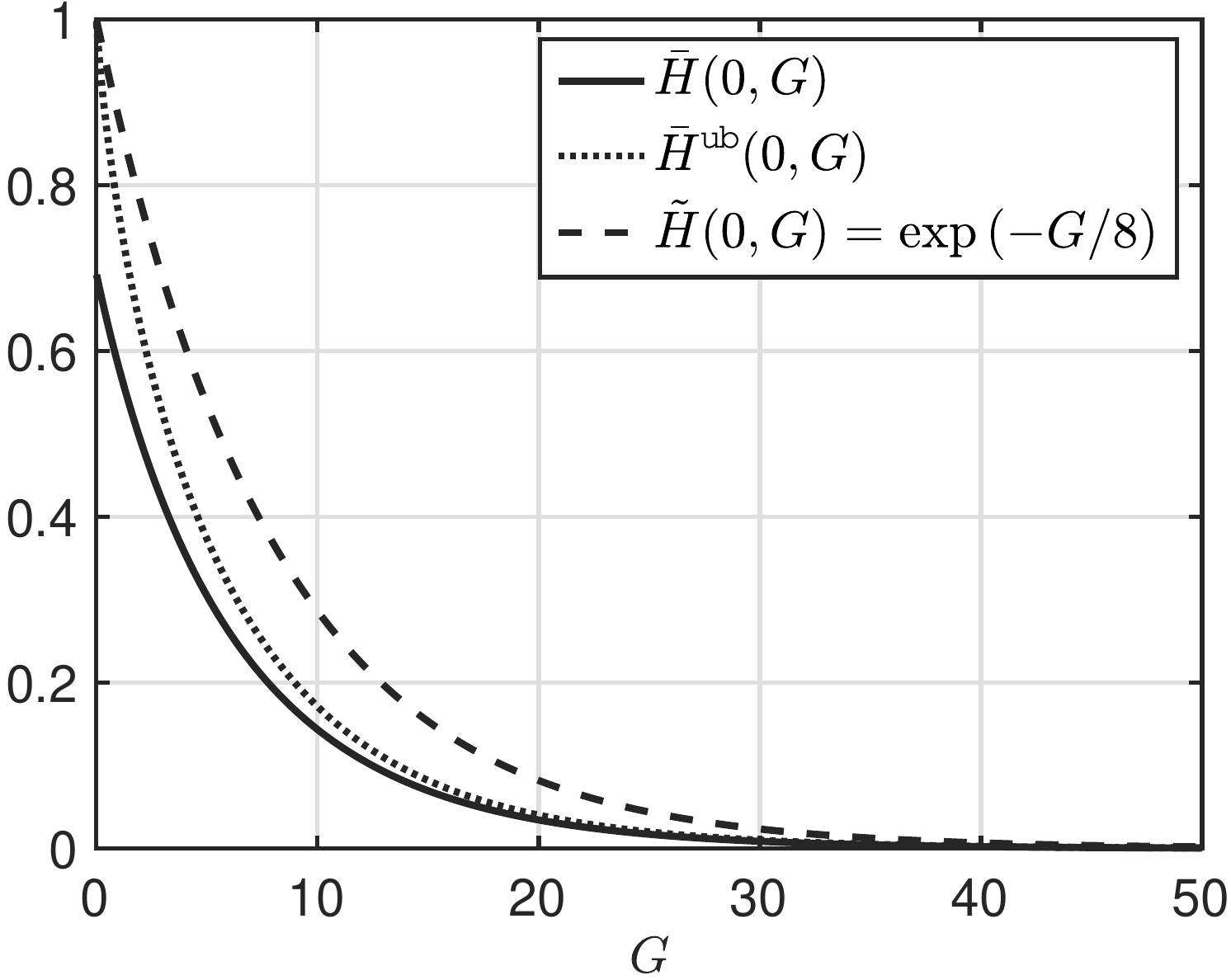}}
\hspace{1cm}
\subfigure[$\delta_1=1.2$]{\includegraphics[height=6cm]{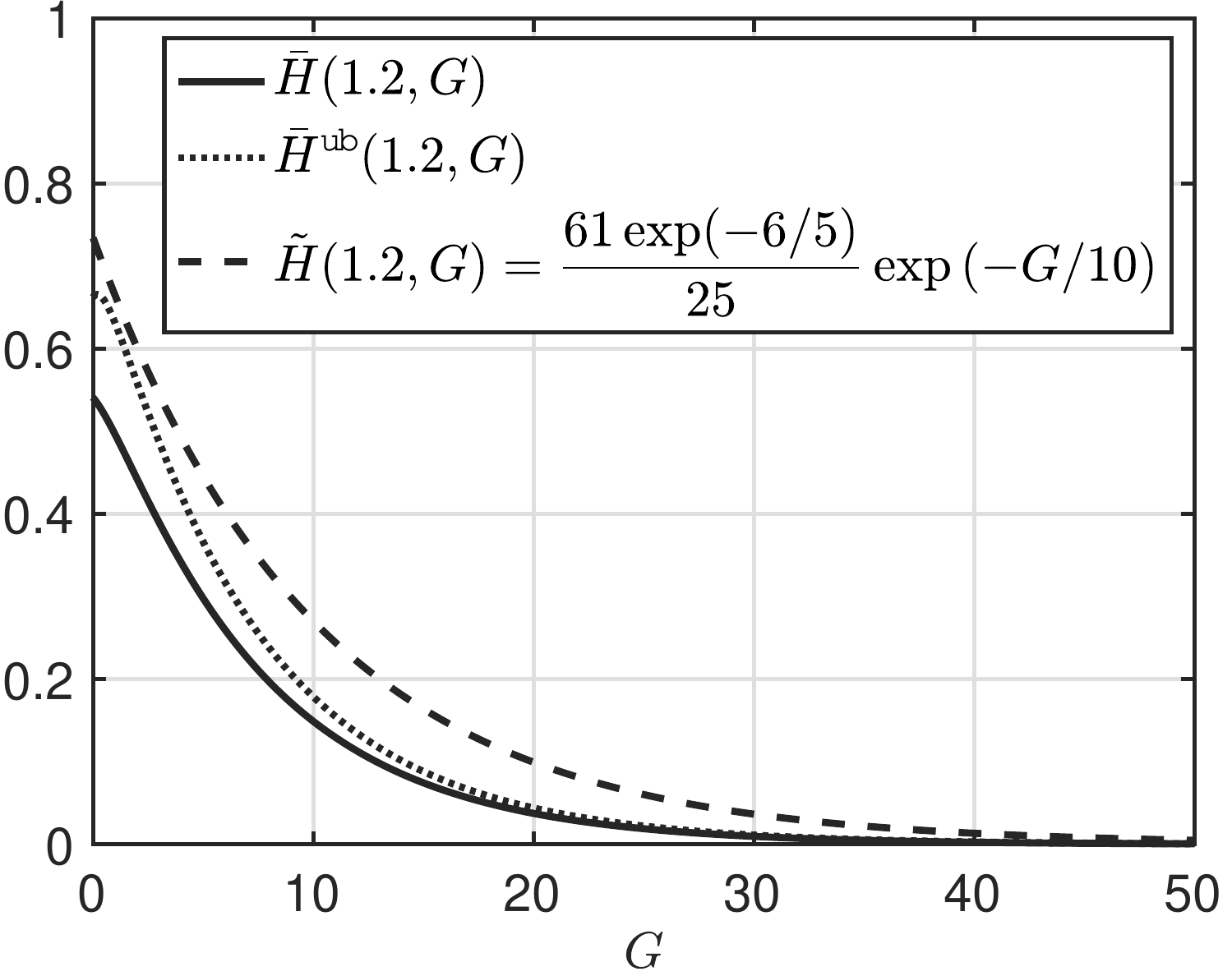}}\vspace{-2mm}
\caption{Comparisons between the expected inference uncertainty and its  upper bounds in~\eqref{eqn: tilde_H} for (a) $\delta_1=0$ or (b) $\delta_1=1.2$.}
\label{fig: expected_inference_uncertainties_example}\vspace{-4mm}
\end{figure}

\subsubsection{Extension to Multi-class Classification}
The above analysis for binary classification can be generalized for  multi-class classification (i.e., $L\geq 3$) as follows. The inference uncertainty in the case of multi-class classification can be upper bounded in terms of pairwise uncertainties,
\begin{align}
H\left(\mathbf{x}_{k}\right)&\leq \sum_{\ell=1}^{L-1}\sum_{\ell^{\prime}
=\ell+1}^{L} \! \log\left(e^{-z_{\ell}(k)}+e^{-z_{\ell^{\prime}}(k)}\right) \!+\! \frac{z_\ell(k)e^{-z_{\ell}(k)} \!+\! z_{\ell^{\prime}}(k)e^{-z_{\ell^{\prime}}(k)}}{e^{-z_{\ell}(k)} \!+\! e^{-z_{\ell^{\prime}}(k)}} = \sum_{\ell=1}^{L-1}\sum_{\ell^{\prime}=\ell+1}^{L} \! H\left(\delta^{(\ell,\ell^{\prime})}_k\right), \nonumber
\end{align}
where $\delta^{(\ell,\ell^{\prime})}_k\triangleq z_\ell(k)-z_{\ell^{\prime}}(k)$ and the equality is approached in the asymptotic case that there exists an infinitely small $\delta^{(\ell,\ell^{\prime})}_k$ compared with other pairwise differential distances. Moreover, the distribution of $\delta^{(\ell,\ell^{\prime})}_k$ conditioned on a partial feature vector $\mathbf{x}_1$ and a feature-selection decision $\mathcal{S}$ is a mixture of $L$ scalar Gaussian components, similar to its counterpart in binary classification~\eqref{eqn: analysis_binary_pdf_delta}. As a result, an upper bound on the conditional expected inference uncertainty can be obtained following the approach for binary classification, i.e., substituting $H\left(\delta^{(\ell,\ell^{\prime})}_k\right)$ with $H^{\sf ub}\left(\delta^{(\ell,\ell^{\prime})}_k\right)$ as that in~\eqref{eqn: ub_tight_binary}. Given the above straightforward procedure, the remainder of the paper assumes binary classification to simplify  notation and exposition.

\subsection{Optimal Feature Selection}
\label{subsec: optimal_feature_selection}
First, Problem (P1) reduces  to the problem of optimal feature selection by conditioning on $k^*$  subsequent transmissions, i.e., transmission is stopped at slot $(k^*+1)$. The decision can be captured by defining a sequence of virtual transmission rates $\{\tilde{Y}_k\}$ such that  $\tilde{Y}_k = Y_0$ for $k \leq k^*$ and $g_{k}\geq g_0$, while $\tilde{Y}_k = 0$ otherwise. Moreover, approximate the function  $\bar{H}(\delta_1, G(\mathcal{S}))$ in the objective in Problem (P1) by its upper bound $\tilde{H}(\delta_1,G(\mathcal{S}))$ in Lemma~\ref{lemma: exponential_bound}. Given the stopping decision and approximate objective, Problem (P1) reduces to the following problem. 
\begin{equation*}\text{(P2)} \quad
\begin{aligned}
\min\limits_{\{\mathcal{S}_k\}_{k=1}^{K}}\quad& \tilde{H}(\delta_1, G(\mathcal{S}))  \\
\mathrm{s.t. }\quad& \vert \mathcal{S}_{k} \vert=\min\{N- \vert\mathcal{W}_{k} \vert,\tilde{Y}_k\}, \ k = 1, \ 2, \ ..., \ K, \\
                 & \mathcal{S}_k \subseteq \left(\mathcal{W}\setminus\mathcal{W}_k \right), \ k = 1, \ 2, \ ..., \ K, \\
                 & \mathcal{S} = \bigcup_{k=1}^{K} \mathcal{S}_{k}.
\end{aligned}
\end{equation*}
It follows from the monotonicity of the objective that  the optimal policy for feature selection in each slot is to choose from not yet received features whose indices correspond to the maximum discriminant gains. Let the importance of each feature be associated with its discriminant gain.  Then the optimal policy, termed \emph{importance-aware feature selection}, is as  presented in Algorithm~\ref{algorithm: feature_selection}.

\begin{algorithm}[t]
\caption{Importance-aware Feature Selection at the Server}
\label{algorithm: feature_selection}
\textbf{Input:} Set of all feature indices $\mathcal{W}$, set of received feature indices $\mathcal{W}_1$, and virtual transmission rates $\{\tilde{Y}_{k}\}_{k=1}^{k*}$\;
\textbf{Initialize:} Optimal feature subsets $\mathcal{S}_k^{\star}=\emptyset, \forall k = 1,2,...,k^{*}$\;
\textbf{for} $k=1,2,\cdots,k^*$ \textbf{do}\\
    \begin{enumerate}
        \item[1:] \emph{Feature selection for one slot.} \\
              \textbf{for} $i=1,2,\cdots,\min\{N-\vert\mathcal{W}_k\vert, \tilde{Y}_k \}$ \textbf{do}\\
                            \begin{itemize}
                            \item[] {Select the most important feature from the admissible set $\left(\mathcal{W}\setminus\mathcal{W}_k\setminus\mathcal{S}_{k}^{\star}\right)$, and add the selected feature index to $\mathcal{S}_{k}^{\star}$, i.e., $\mathcal{S}_{k}^{\star} := \mathcal{S}_k^{\star}\bigcup \Bigg\{{\mathop{\text{argmax}}\limits_{n \in \left(\mathcal{W}\setminus\mathcal{W}_k\setminus\mathcal{S}_{k}^{\star}\right) }\! g(n)} \Bigg\}$;}\\
                            \end{itemize} 
            \textbf{end for}
        \item[2:] \emph{Update.} Update the set of virtually received features' indices as $\mathcal{W}_{k+1}=\mathcal{W}_{k}\bigcup \mathcal{S}_{k}^{\star}$\;
    \end{enumerate}
\textbf{end for} \\    
\textbf{Output:} $\mathcal{S}_1^{\star},  \mathcal{S}_2^{\star}, ...,  \mathcal{S}_{k^{*}}^{\star}$.
\end{algorithm}

\section{ProgressFTX Control -- Optimal Stopping} 
\label{sec: linear_feature_stopping}

{Given the importance-aware feature selection process in the preceding section, we describe the remaining design of  ProgressFTX to solve the problem of  optimal stopping control}. Following the notation used in Section~\ref{subsec: optimal_feature_selection}, transmission is assumed to stop in slot $(k^{*}+1)$, resulting in  $k^{*}$ transmitted incremental feature vectors.
The  stopping-control indicators $\{b_k\}$ and $k^{*}$ can be related as 
\begin{equation}
b_k = 
\begin{cases}
1, & \forall 1\leq k \leq k^{*},\\
0,  & \forall k^{*}+1 \leq k \leq K.
\end{cases}
\label{eqn: relate_b_D}
\end{equation} 
As before, for tractability, we approximate  the objective in Problem (P1) (see~\eqref{eqn: simplified_objective}) using the upper bound $\tilde{H}(\delta_1,G(\mathcal{S}))$ in Lemma~\ref{lemma: exponential_bound}. Then given the  optimally selected and transmitted  features~$\{\mathcal{S}^{\star}_k\}$, Problem (P1) can be approximated as 
\begin{equation*}\text{(P3)} \quad
\begin{aligned}
\min\limits_{k^{*}}\quad& \mathbb{E}_{\{g_k\}_{k=1}^{K}}\left[\tilde{H}(\delta_1,G(\mathcal{S^{\star}})) \ \big\vert \ \theta_1 \right] + c_0 k^{*}  \\
\mathrm{s.t. }\quad& k^{*} \in \{0,1,...,K\},\\
                 & \mathcal{S}^{\star} = \bigcup_{\overset{1\leq k \leq k^{*}}{g_k\geq g_0}} \mathcal{S}_{k}^{\star}.
\end{aligned}
\end{equation*}
Note the dependence of the equivalent optimal feature selection $\mathcal{S}^{\star}$ on the random channel gain  $g_k$. Problem (P3) belongs to the class of optimal-stopping problems \cite{Chow1971GreatExpectations}. In operation, the server makes a binary decision (i.e., $b_1 = 0$ or $1$) on whether to transmit in the current slot. The solution for Problem (P3) can be converted into the control decision, $b_1$, by  
\begin{equation}
\label{eqn: relate_D_b1}
b_1=\min\{1,k^{*}\}.
\end{equation}
The optimal control policy (also known as optimal stopping-rules) are derived  for both the Gaussian and fading channels in the following subsections.

\subsection{Optimal Stopping with Gaussian Channels}
In this subsection, we address the optimal stopping for Gaussian channels, where $g_k=g_0,  \forall k$. In this case, the optimal feature subsets $\{\mathcal{S}_k^{\star}\}$ for all current and future slots can be determined by the importance aware feature selection (i.e., Algorithm~\ref{algorithm: feature_selection}) at the beginning of the current slot. Given the total throughput in $k^{*}$ proceeding transmissions, $Y_0 k^{*}$, the total number of features selected for transmission is $\vert\bigcup_{k=1}^{k^{*}}\mathcal{S}_{k}^{\star}\vert=\min\{N-\vert\mathcal{W}_1\vert, Y_0 k^{*}\}$ where $N-\vert\mathcal{W}_1\vert$ specifies the number of untransmitted features in the current slot. Let $\mathrm{G}^{\star}(\theta_1,k^{*})$ denote the discriminant gain of optimally selected features subject to a number of transmissions $k^{*}$ and conditioned on $\theta_1$. Then the cumulative  discriminant gain over  $k^{\star}$ transmissions is obtained as  $G(\mathcal{S}^{\star})= \mathrm{G}^{\star}(\theta_1,k^{*})$. 
Problem (P3) can be rewritten for a  Gaussian channel as
\begin{equation*}\text{(P4)}\quad
\begin{aligned}
\min\limits_{k^{*}}\quad&   \tilde{H}\left(\delta_1,\mathrm{G}^{\star}(\theta_1, k^{*})\right) + c_0 k^{*}   \\
\mathrm{s.t.}\quad& k^{*} \ \in \ \{0,  1,  ..., K \}. 
\end{aligned}
\end{equation*}
This problem is convex. To prove this, we first rewrite the cumulative discriminant gain as
\begin{eqnarray}
\label{eqn: def_optimal_gain_beta}
\mathrm{G}^{\star}(\theta_1, k^{*})
&=&   \mathop{\text{max}}\limits_{ \overset{\mathcal{S} \subseteq\left(\mathcal{W}\setminus\mathcal{W}_{k}\right)}{\ |\mathcal{S}|=\min\{ N-\vert\mathcal{W}_1\vert , Y_0 k^{*}\}} } \sum_{n\in \mathcal{S}}g(n) =\sum_{k=1}^{k^{*}}G\left(\mathcal{S}_{k}^{\star}\right).
\end{eqnarray}
Using the fact that  $G(\mathcal{S}_{k}^{\star})\geq G(\mathcal{S}_{k+1}^{\star})$ for all $k$, one can infer from the last equation  that 
\begin{equation}
\label{eqn: convex_G_k}
\mathrm{G}^{\star}(\theta_1,k^{*}-1) + \mathrm{G}^{\star}(\theta_1,k^{*}+1)  \geq 2\mathrm{G}^{\star}(\theta_1,k^{*}),
\end{equation}
and thereby conclude that $\mathrm{G}^{\star}(\theta_1,k^{*})$ is \emph{concave} \emph{with respect to} (w.r.t.) $k^{*}$.
Combining this fact and that $\tilde{H}(\delta_1, G)$ is a convex and monotone decreasing function of $G$ shows that Problem (P4) is also convex. Based on a standard approach in discrete convex optimization~\cite{Murota98DiscreteConvex}, the optimal solution for  Problem (P4) can be  derived as
\begin{equation}
\label{eqn: importance_static_beta_star}
k^{\star} = \min\{K,~\tilde{k}\},
\end{equation}
where $\tilde{k}$ is given by
\begin{equation}
\label{eqn: marginal_reward_criteria_gaussian}
\tilde{k} = \min\{k\in\{0,...,K-1\} \ | \ \tilde{H}\left(\delta_1,\mathrm{G}^{\star}(\theta_1, k)\right)-\tilde{H}\left(\delta_1,\mathrm{G}^{\star}(\theta_1,k+1)\right)\leq c_0 \}.
\end{equation} 
Combining \eqref{eqn: relate_D_b1}, \eqref{eqn: importance_static_beta_star}, and \eqref{eqn: marginal_reward_criteria_gaussian} leads to the following main result of the section.

\begin{Theorem}[\emph{Optimal Stopping for Gaussian Channels}]
\label{theorem: optimal_stopping_importance}
\emph{The stopping rule for Gaussian channels, which solves Problem (P4), is that the transmission in ProgressFTX should be terminated if $\tilde{H}\left(\delta_1,0\right)-\tilde{H}\left(\delta_1,\mathrm{G}^{\star}(\theta_1,1)\right)\leq c_0$. In other words, the optimal stopping policy is given as 
\begin{equation}
\label{eqn: stopping_rule_gaussian}
b_1^{\star}=
\begin{cases}
0,  & R \leq c_0, \\
1, &\text{otherwise,}
\end{cases}
\end{equation}
where the incremental  reward (i.e., reduction in classification uncertainty) from transmission in the current slot is denoted as $R =  \tilde{H}\left(\delta_1,0\right)-\tilde{H}\left(\delta_1,\mathrm{G}^{\star}(\theta_1,1)\right)$.} 
\end{Theorem}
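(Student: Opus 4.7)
The plan is to certify the threshold structure in \eqref{eqn: stopping_rule_gaussian} by showing that Problem (P4) is a one-dimensional discrete convex minimization over $k^{*}$ and then invoking the standard first-order optimality condition for such problems. The core step is to establish convexity of the objective $F(k^{*}) \triangleq \tilde{H}(\delta_1, \mathrm{G}^{\star}(\theta_1, k^{*})) + c_0 k^{*}$.

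First, I would combine two facts already in hand. By Lemma~\ref{lemma: exponential_bound}, $\tilde{H}(\delta_1, G) = c_1 e^{-c_2 G}$ is a convex, monotonically decreasing function of $G$. By the inequality \eqref{eqn: convex_G_k} (a consequence of importance-aware feature selection picking features in order of decreasing $g(n)$), the cumulative discriminant gain $\mathrm{G}^{\star}(\theta_1, k^{*})$ is concave in $k^{*}$ on the integer lattice. Composing a convex, decreasing scalar function with a concave function yields a convex function on the lattice, and adding the linear term $c_0 k^{*}$ preserves discrete convexity. Hence $F$ is discretely convex on $\{0, 1, \ldots, K\}$.

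Next, I would use the fact that for any discretely convex $F$, the forward difference $\Delta F(k) \triangleq F(k+1) - F(k)$ is nondecreasing in $k$, and the unconstrained minimizer is the smallest $k$ for which $\Delta F(k) \geq 0$. Writing
\begin{equation*}
\Delta F(k) = c_0 - \bigl[\tilde{H}(\delta_1, \mathrm{G}^{\star}(\theta_1, k)) - \tilde{H}(\delta_1, \mathrm{G}^{\star}(\theta_1, k+1))\bigr] = c_0 - R(k),
\end{equation*}
so $R(k)$ is monotonically nonincreasing in $k$, and $\Delta F(k) \geq 0$ is exactly the condition $R(k) \leq c_0$ in \eqref{eqn: marginal_reward_criteria_gaussian}. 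Imposing the horizon constraint $k^{*} \leq K$ yields $k^{\star} = \min\{K, \tilde{k}\}$ as in \eqref{eqn: importance_static_beta_star}.

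Finally, translating $k^{\star}$ into the current-slot action via \eqref{eqn: relate_D_b1}: since $b_1^{\star} = \min\{1, k^{\star}\}$, we have $b_1^{\star} = 0$ if and only if $k^{\star} = 0$, which, by the monotonicity of $R(k)$, holds if and only if $R(0) \leq c_0$. Recognizing $R = R(0) = \tilde{H}(\delta_1, 0) - \tilde{H}(\delta_1, \mathrm{G}^{\star}(\theta_1, 1))$ recovers exactly the rule in \eqref{eqn: stopping_rule_gaussian}. The main obstacle I anticipate is the discrete-convexity bookkeeping, in particular verifying the composition rule on the lattice and handling the boundary case in which fewer than $Y_0$ features remain, so that $\mathrm{G}^{\star}(\theta_1, k^{*})$ saturates; in that regime both $R(k)$ drops to zero and the threshold condition is automatically met, so the stopping rule remains consistent.
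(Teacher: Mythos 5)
Your proposal is correct and follows essentially the same route as the paper: concavity of $\mathrm{G}^{\star}(\theta_1,k^{*})$ from the ordering $G(\mathcal{S}_k^{\star})\geq G(\mathcal{S}_{k+1}^{\star})$ in \eqref{eqn: convex_G_k}, composition with the convex decreasing bound of Lemma~\ref{lemma: exponential_bound} to get discrete convexity of (P4), the first-order (forward-difference) optimality condition yielding \eqref{eqn: importance_static_beta_star}--\eqref{eqn: marginal_reward_criteria_gaussian}, and the translation to $b_1^{\star}$ via \eqref{eqn: relate_D_b1}. Your explicit treatment of the saturation regime where fewer than $Y_0$ features remain is a small but welcome addition that the paper leaves implicit.
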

The optimal stopping policy in Theorem~\ref{theorem: optimal_stopping_importance} is interpreted as follows. As $ k^{*}$ grows, the transmission cost $c_0 k^{*} $ accumulates at a constant speed $c_0$, while the incremental reward from transmission  in slot $k^{*}$ monotonically decreases due to the convexity of the function  $\tilde{H}\left(\delta_1,\mathrm{G}^{\star}(\theta_1,k^{*})\right)$ (see Lemma~\ref{lemma: exponential_bound} and~\eqref{eqn: convex_G_k}). For this reason, $k^{\star}$ is exactly the maximum  number of subsequent transmissions such that the incremental  reward from  transmitting in slot $(k^{\star}+1)$ is no larger than the per-slot transmission cost, yielding the stopping threshold. In the special case of $k^{\star}=0$, transmissions should be terminated in  the current slot. 

The  server  algorithm for controlling ProgressFTX, including both feature selection and optimal stopping, is summarized in Algorithm~\ref{algo: online_control}, where the index $k$ refers to the $k$-th slot in an edge inference task instead of that for slot $k$ from a current slot in control problems.

\begin{algorithm}[t]
\caption{ProgressFTX Control at the Server for Gaussian Channels}
\label{algo: online_control}
\textbf{Initialize:} Observe the system state of the first slot $\theta_1$\;
\textbf{repeat:}\\
    \begin{enumerate}
        \item[1:] \emph{Feature selection.} Determine optimal feature subset $\mathcal{S}^{\star}_{k}$ by Algorithm~\ref{algorithm: feature_selection};
        \item[2:] \emph{Stopping control.} Evaluate the optimal stopping control indicator $b_k^{\star}$ by~\eqref{eqn: stopping_rule_gaussian};
        \item[3:]  \textbf{if} $b_k^{\star}=1$ \textbf{then:}\\
        \begin{itemize}
            \item[1) ] \emph{Feature transmission.} Receive the incremental feature vector $\Delta\mathbf{x}_k$;
            \item[2) ] \emph{Partial feature vector update.} Update the partial feature vector by the rule $\mathbf{x}_{k+1}  = \mathbf{x}_k + \mathbf{M}\Delta\mathbf{x}_{k}$, with the placement matrix $\mathbf{M}$ given by~\eqref{eqn: update_mapping};
            \item[3)] \emph{System state evolution.} Evolve to state $\theta_{k+1}$\;
        \end{itemize}
        \textbf{end if}
    \end{enumerate}
\textbf{until} Scheduled stopping $b_k^{\star}=0$;\\
\emph{Server inference.} Infer the label $\hat{\ell}$ using~\eqref{eqn: linear_classifer}\;
\emph{Termination.} Feedback the inferred label $\hat{\ell}$ to device.
\end{algorithm}

\subsection{Optimal Stopping with Fading Channels}
In this sub-section, the optimal stopping policy for Gaussian channels is  extended to the case of  fading channels with outage. This requires that two issues be addressed. The first is the optimal retransmission strategy upon the occurrence of an outage event. Given  importance-aware feature selection, the features that fail to be received due to channel outage are most important ones among the undelivered features. This leads to  the following strategy. 
\begin{Lemma}
[\emph{Optimality of Feature Retransmission}]
\label{lemma: retransmission}
\emph{To minimize classification uncertainty, it is optimal to retransmit an incremental feature vector, whose transmission fails  due to channel outage, in the next slot.}
\end{Lemma}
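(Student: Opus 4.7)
The plan is to reduce this to a direct corollary of the optimal importance-aware feature selection established through Problem (P2). The key observation is that when an outage occurs in slot $k$, no features are delivered, so the cumulative set of received indices is unchanged: $\mathcal{W}_{k+1} = \mathcal{W}_{k}$, and hence the admissible set $\mathcal{W}\setminus\mathcal{W}_{k+1}$ in the next slot is identical to $\mathcal{W}\setminus\mathcal{W}_{k}$. In particular, it still contains all the indices in the selection $\mathcal{S}_k^\star$ that failed to be delivered in slot $k$. No new feature becomes admissible, and no previously admissible one becomes inadmissible.

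Next I would invoke the monotonicity of $\tilde{H}(\delta_1, G(\mathcal{S}))$ in $G(\mathcal{S})$ (Lemma~\ref{lemma: exponential_bound}) together with the additivity of per-dimension discriminant gains in \eqref{eqn: d_gain_additivity}. Exactly as in the derivation of Algorithm~\ref{algorithm: feature_selection}, the subset of $\mathcal{W}\setminus\mathcal{W}_{k+1}$ of cardinality $\min\{N-|\mathcal{W}_{k+1}|, Y_0\}$ that minimizes the objective of Problem (P2) for the remaining horizon is obtained by greedily picking the indices with the largest $g(n)$. Because the ordering of $\{g(n)\}$ is a property of the data distribution and independent of the channel realization, the top $|\mathcal{S}_k^\star|$ indices of $\mathcal{W}\setminus\mathcal{W}_{k+1}$ ranked by $g(n)$ are precisely the same indices that were ranked top in slot $k$, namely those in $\mathcal{S}_k^\star$. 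Therefore the prescription ``retransmit $\Delta \mathbf{x}_k$ in slot $k+1$'' coincides with the output of the provably optimal importance-aware selection rule applied to slot $k+1$.

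To finish, I would rule out deviations by a simple exchange argument: any alternative that in slot $k+1$ sends a new index $n' \notin \mathcal{S}_k^\star$ in lieu of some failed $n \in \mathcal{S}_k^\star$ necessarily has $g(n') \le g(n)$, since otherwise $n'$ would itself have been chosen in slot $k$. By additivity, such a swap can only decrease the cumulative $G(\mathcal{S})$ of the features ultimately delivered, which by monotonicity of $\tilde{H}$ can only increase the expected uncertainty. The transmission cost term $c_0\sum_k b_k$ in the objective is unaffected, since it depends on whether slot $k+1$ is used, not on which features occupy it. This is the only subtlety worth flagging: one must fix the horizon and the stopping decision so that the comparison is between feature-content choices for the same active slot, after which the claim collapses to the greedy argument above.
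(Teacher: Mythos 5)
Your argument is correct and is essentially the paper's own justification made explicit: the paper simply observes that, under importance-aware selection, the features lost to outage remain the most important undelivered ones, so the greedy rule applied in slot $k+1$ (with $\mathcal{W}_{k+1}=\mathcal{W}_k$ and the same admissible set and cardinality constraint) re-selects exactly $\mathcal{S}_k^{\star}$. Your added exchange argument and the remark about holding the stopping decision fixed are sound elaborations of the same reasoning rather than a different route.
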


The second issue is whether the optimal stopping problem in the current case retains the convexity of its Gaussian-channel counterpart. The answer is positive as shown in the sequel. Given the fixed communication rate $R_0$ and channel outage probability $p_{\sf o}$, the transmitted features can be successfully received with probability $1-p_{\sf o}$. {The number of successful transmissions {out of $k^{*}$ trials}, denoted by $k^{\prime}$, follows a Binomial distribution, $\text{Binom}\left(k^{*},1-p_{\sf o}\right)$.} The conditional \emph{probability mass function} (PMF) of $k^{\prime}$ is given by 
\begin{equation}
    \label{eqn: pmf_fading_success}
    p(k^{\prime}|k^{*}) = \binom{k^{*}}{k^{\prime}}(1-p_{\sf o})^{k^{\prime}}p_{\sf o}^{k^{*}-k^{\prime}}, \ k'=0,1, \cdots, k^{*}.
\end{equation}

Based on the retransmission strategy in  Lemma~\ref{lemma: retransmission}, the total discriminant gain of  features in $k^{\prime}$ successfully received incremental feature vectors is given by  $\mathrm{G}^{\star}(\theta_1,k^{\prime})$ defined in~\eqref{eqn: def_optimal_gain_beta}.  Combining $\mathrm{G}^{\star}(\theta_1,k^{\prime})$ and~\eqref{eqn: pmf_fading_success}, the expectation of  the upper bound of uncertainty after $k^{*}$ times of transmission is denoted as $\Phi(\theta_1,k^{*})$ and given by
\begin{eqnarray}
\Phi(\theta_1,k^{*}) \triangleq \mathbb{E}_{k^{\prime}}\left[\tilde{H}\left(\delta_1,\mathrm{G}^{\star}(\theta_1,k^{\prime})\right)  \big\vert  k^{*} \right] 
= \sum_{k^{\prime}=0}^{k^{*}}{\tilde{H}\left(\delta_1,\mathrm{G}^{\star}(\theta_1,k^{\prime})\right) p(k^{\prime}|k^{*}) }.
\end{eqnarray}
Then Problem (P3) can be rewritten for a fading channel with outage as \begin{equation*}\text{(P5)}\quad
\begin{aligned}
\min\limits_{k^{*}}\quad&  \Phi(\theta_1,k^{*}) + c_0 k^{*}   \\
\mathrm{s.t.}\quad& k^{*} \ \in \ \{0,  1,  ..., K \}. 
\end{aligned}
\end{equation*}
\begin{Lemma}
\label{lemma:Convexity:FadingChannel}
\emph{The problem of optimal stopping for a fading channel with outage, namely Problem (P5), is convex.}
\end{Lemma}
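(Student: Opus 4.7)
The plan is to prove that the objective of (P5) is a discrete convex function of $k^{*} \in \{0, 1, \ldots, K\}$, so that an argument analogous to the one used for (P4) applies. Since $c_0 k^{*}$ is linear, and hence trivially convex, it suffices to establish that $\Phi(\theta_1, k^{*})$ has nonnegative second forward difference. First, I would reduce this to a one-dimensional convexity property of the integrand. Define $h(k') \triangleq \tilde{H}\!\left(\delta_1, \mathrm{G}^{\star}(\theta_1, k')\right)$. By~\eqref{eqn: convex_G_k}, $\mathrm{G}^{\star}(\theta_1, k')$ is concave in $k'$, and by Lemma~\ref{lemma: exponential_bound}, $\tilde{H}(\delta_1, G)$ is convex and monotone decreasing in $G$. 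Composing a convex nonincreasing function with a concave function preserves convexity, so $h$ is convex and nonincreasing on the nonnegative integers.

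Next, I would exploit a coupling of the binomial counts to avoid direct manipulation of the PMF~\eqref{eqn: pmf_fading_success}. On a common probability space, introduce i.i.d.\ Bernoulli$(1-p_{\sf o})$ random variables $\{X_i\}_{i \geq 1}$ and set $S_m = X_1 + \cdots + X_m$, so that $\Phi(\theta_1, k^{*}) = \mathbb{E}\!\left[h(S_{k^{*}})\right]$. Since $S_{k^{*}+1} = S_{k^{*}} + X_{k^{*}+1}$ with $X_{k^{*}+1}$ independent of $S_{k^{*}}$, conditioning on $X_{k^{*}+1}$ gives the compact expression for the first forward difference:
\[
\Phi(\theta_1, k^{*}+1) - \Phi(\theta_1, k^{*}) = (1 - p_{\sf o})\, \mathbb{E}\!\left[h(S_{k^{*}}+1) - h(S_{k^{*}})\right].
\]

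Finally, the convexity of $h$ implies that the unit forward difference $\Delta h(s) \triangleq h(s+1) - h(s)$ is a nondecreasing function of $s \in \mathbb{Z}_{\geq 0}$. Under the same coupling, $S_{k^{*}+1} \geq S_{k^{*}}$ pointwise, so in particular $S_{k^{*}+1}$ stochastically dominates $S_{k^{*}}$, which yields $\mathbb{E}\!\left[\Delta h(S_{k^{*}+1})\right] \geq \mathbb{E}\!\left[\Delta h(S_{k^{*}})\right]$. Combining this with the previous display,
\[
\Phi(\theta_1, k^{*}+2) - 2\Phi(\theta_1, k^{*}+1) + \Phi(\theta_1, k^{*}) = (1-p_{\sf o})\Bigl(\mathbb{E}\!\left[\Delta h(S_{k^{*}+1})\right] - \mathbb{E}\!\left[\Delta h(S_{k^{*}})\right]\Bigr) \geq 0,
\]
which is the desired discrete convexity. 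The main subtle point is the choice of coupling: attempting to verify the second-difference inequality directly from the binomial PMF leads to unwieldy combinatorial sums involving $\binom{k^{*}}{k'}$, whereas realizing $S_{k^{*}}$ and $S_{k^{*}+1}$ on a common probability space cleanly reduces the problem to the already-established one-dimensional convexity of $h$ inherited from the Gaussian-channel analysis.
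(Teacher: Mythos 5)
Your proof is correct, and it takes a genuinely different route from the paper's. The paper works directly with the second-order difference of the binomial PMF $\nabla^2 p(k'\mid k^{*})$, analyzes the sign pattern of the quadratic factor $\eta(k')$ to locate the interval where it is non-positive, and then sandwiches $\nabla^2\Phi$ against an auxiliary linear comparison function $\tilde{\Phi}$ whose second difference vanishes, choosing the slope $l_1$ and intercept $l_2$ to make the comparison work on each sub-range of $k'$. Your argument instead couples the binomial counts as partial sums $S_m=X_1+\cdots+X_m$ of i.i.d.\ Bernoulli variables, so that $\Phi(\theta_1,k^{*})=\mathbb{E}[h(S_{k^{*}})]$ with $h(k')=\tilde{H}(\delta_1,\mathrm{G}^{\star}(\theta_1,k'))$, and reduces the second difference of $\Phi$ to the monotonicity of the unit increment $\Delta h$, which is exactly the discrete convexity of $h$ already established in the Gaussian-channel analysis via~\eqref{eqn: convex_G_k} and Lemma~\ref{lemma: exponential_bound}. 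Both proofs ultimately rest on the same two ingredients (concavity of $\mathrm{G}^{\star}$ in the number of successful slots and convexity/monotonicity of $\tilde{H}$), but your coupling avoids all combinatorial manipulation of $\binom{k^{*}}{k'}$, makes the role of $1-p_{\sf o}$ in the first forward difference transparent, and is arguably easier to verify than the paper's sandwich, whose displayed inequalities require careful bookkeeping of the sign-change interval $[k_1,k_2]$. The one point worth stating explicitly in your write-up is the discrete version of the composition rule: if $G(k-1)+G(k+1)\leq 2G(k)$ and $f$ is convex and nonincreasing, then $f(G(k-1))+f(G(k+1))\geq 2f\bigl(\tfrac{G(k-1)+G(k+1)}{2}\bigr)\geq 2f(G(k))$, which is what makes $h$ convex on the nonnegative integers (including the regime where $\mathrm{G}^{\star}$ saturates once all features are exhausted, since the increments $G(\mathcal{S}_k^{\star})$ are nonincreasing and eventually zero).
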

\begin{proof}
See Appendix~\ref{app: proof_convex_stopping_fading}.
\end{proof}
 Given the convexity and following the arguments used to solve Problem (P4), the optimal solution for Problem (P5) is derived to be 
\begin{equation}
    \label{eqn: optimal_sol_fading}
    k^{\star} = \min\{K,\tilde{k}\},
\end{equation}
where $\tilde{k}$ in this case is given by
\begin{equation}
\label{eqn: marginal_reward_criteria_fading}
\tilde{k} = \min\{k\in\{0,...,K-1\} \ | \ \Phi\left(\theta_1,0\right)-\Phi\left(\theta_1,k\right)\leq c_0 \}.
\end{equation} 
The optimal stopping rule follows  from~\eqref{eqn: relate_D_b1}, \eqref{eqn: optimal_sol_fading}, and \eqref{eqn: marginal_reward_criteria_fading}, 
leading to the following:

\begin{Theorem}[\emph{Optimal Stopping for Fading Channels}] \emph{The stopping rule for a fading channel with outage, which solves Problem (P5), is that the transmission in ProgressFTX should be terminated if  $\Phi(\theta_1,0)-\Phi(\theta_1,1)\leq c_0$. The corresponding optimal stopping policy is given as 
\begin{equation}
\label{eqn: stopping_rule_fading}
b_1^{\star}=
\begin{cases}
0,  & R \leq \frac{c_0}{1-p_{\sf o}}, \\
1, &\text{otherwise,}
\end{cases}
\end{equation}
where the incremental reward $R= \tilde{H}\left(\delta_1,0\right)-\tilde{H}\left(\delta_1,\mathrm{G}^{\star}(\theta_1,1)\right)$ follows that in Theorem~\ref{theorem: optimal_stopping_importance}.}
\label{theorem: optimal_stopping_importance_fading}
\end{Theorem}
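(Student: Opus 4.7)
The plan is to mirror the argument used to prove Theorem~\ref{theorem: optimal_stopping_importance} for the Gaussian channel, now applied to the expected objective $\Phi(\theta_1,k^{*})+c_0 k^{*}$ of Problem (P5). Lemma~\ref{lemma:Convexity:FadingChannel} already supplies what is needed for the transfer: it guarantees that $\Phi(\theta_1,\cdot)$ is discretely convex and monotone non-increasing in $k^{*}$, so by the same discrete convex optimization argument invoked before~\cite{Murota98DiscreteConvex}, the minimizer $k^{\star}$ is the smallest integer $k\in\{0,\dots,K-1\}$ at which the marginal decrease $\Phi(\theta_1,k)-\Phi(\theta_1,k+1)$ has fallen to or below the per-slot cost $c_0$. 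Translating via~\eqref{eqn: relate_D_b1} then yields $b_1^{\star}=0$ precisely when this threshold is already satisfied at $k=0$, i.e., when $\Phi(\theta_1,0)-\Phi(\theta_1,1)\leq c_0$, establishing the first form of the stopping rule.

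Second, I would reduce this one-slot threshold to the explicit form in~\eqref{eqn: stopping_rule_fading}. Specializing the Binomial PMF~\eqref{eqn: pmf_fading_success} to $k^{*}=1$ yields $p(0|1)=p_{\sf o}$ and $p(1|1)=1-p_{\sf o}$, giving
\begin{align*}
\Phi(\theta_1,0) &= \tilde{H}(\delta_1,0),\\
\Phi(\theta_1,1) &= p_{\sf o}\,\tilde{H}(\delta_1,0)+(1-p_{\sf o})\,\tilde{H}\bigl(\delta_1,\mathrm{G}^{\star}(\theta_1,1)\bigr).
\end{align*}
Subtracting cleanly factors out the survival probability, $\Phi(\theta_1,0)-\Phi(\theta_1,1)=(1-p_{\sf o})\bigl[\tilde{H}(\delta_1,0)-\tilde{H}\bigl(\delta_1,\mathrm{G}^{\star}(\theta_1,1)\bigr)\bigr]=(1-p_{\sf o})R$, with $R$ as defined in Theorem~\ref{theorem: optimal_stopping_importance}. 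Rearranging the threshold $\Phi(\theta_1,0)-\Phi(\theta_1,1)\leq c_0$ then produces $R\leq c_0/(1-p_{\sf o})$, which is exactly~\eqref{eqn: stopping_rule_fading}.

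The only mildly delicate step is in the first paragraph, where I need convexity of $\Phi(\theta_1,\cdot)$ to imply that the first differences $\Phi(\theta_1,k)-\Phi(\theta_1,k+1)$ are non-increasing in $k$, so that the greedy one-step threshold rule is globally, and not merely locally, optimal. This monotonicity is immediate from discrete convexity, so the genuine work is already absorbed into Lemma~\ref{lemma:Convexity:FadingChannel}. The remainder of the argument is a short algebraic reduction, and the factor $(1-p_{\sf o})$ in the denominator of the threshold admits the natural interpretation that outages inflate the effective per-slot cost by the inverse of the successful-delivery probability, recovering the Gaussian stopping rule of Theorem~\ref{theorem: optimal_stopping_importance} as the $p_{\sf o}\to 0$ limit.
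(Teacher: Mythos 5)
Your proposal is correct and follows essentially the same route as the paper: invoke the convexity of Problem (P5) from Lemma~\ref{lemma:Convexity:FadingChannel}, apply the same discrete convex optimization argument as for Problem (P4) to get the marginal-decrease threshold, and combine with~\eqref{eqn: relate_D_b1}. Your explicit computation showing $\Phi(\theta_1,0)-\Phi(\theta_1,1)=(1-p_{\sf o})R$ is the algebraic step the paper leaves implicit when passing to the $\frac{c_0}{1-p_{\sf o}}$ threshold, and it is carried out correctly.
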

Comparing Theorems~\ref{theorem: optimal_stopping_importance} and~\ref{theorem: optimal_stopping_importance_fading}, the optimal stopping policy in the case of fading channel retains the  threshold based structure of its Gaussian-channel counterpart. Nevertheless, the transmission threshold is higher in the former than that in the latter (by a factor of $\frac{1}{1-p_{\sf o}}$) due to the additional communication cost caused by channel outage. To be specific, given the same level of uncertainty reduction, the needed number of transmission, say   $k$, in the case of Gaussian channel is expected to increase to $\frac{k}{1-p_{\sf o}}$ when fading is present. As a result, a requirement on inference accuracy achievable in the Gaussian channel case may not be affordable in the fading channel case.

ProgressFTX control in Algorithm~\ref{algo: online_control} can be easily modified to account for outage in fading channels as follows. First,  the optimal stopping control indicator $b_k^{\star}$ shall be evaluated by~\eqref{eqn: stopping_rule_fading} instead. Next, Step 3 should be modified as follows. If the transmission in sub-step 1) is successful, then sub-step 2) is partial feature vector update as presented in Algorithm~\ref{algo: online_control}; otherwise,  sub-step 2) involves  the server sending  a retransmission signal to the device  for retransmission.

\section{ProgressFTX for CNN Classifiers}
\label{sec: cnn_model}

Given the complex architecture of a CNN model, ProgressFTX for a CNN classifier cannot be directly designed using an optimization approach as in the preceding sections for its linear counterpart. To overcome the difficulty, we leverage the principles underpinning the optimized ProgressFTX policies, namely importance-aware feature selection and optimal stopping,  for the linear model to design their counterparts under the CNN model as follows. 

\subsection{Importance-aware Feature Selection}
\label{subsec: feature_selection_cnn}

Consider the step of feature selection in the ProgressFTX protocol in Section~\ref{subsec: protocol}. The discriminant gain defined in~\eqref{eqn: linear_pair_dgain} for a linear classifier, which underpins the associated metric  of  feature importance,  is not applicable to a CNN model. In this case,  we propose the use of another suitable  metric proposed in \cite{Molchanov2019CVPR}, which is introduced next. Consider  the parameters $\{w_m\}$ of  the last CONV layer of the device sub-model. Let $\frac{\partial\mathcal{L}}{\partial w_m}$ denote the $m$-th entry of the gradient, i.e., the partial derivative of the learning loss function $\mathcal{L}$ w.r.t. the parameter $w_m$,  which is available from the last round of model training as computed using the back-propagation algorithm. Then its importance can be measured using the associated reduction 
of learning loss from retaining $w_m$,
approximated by the following first-order Taylor expansion of the squared loss of prediction errors~\cite{Molchanov2019CVPR}: $\tilde{I}(m)=\left(\frac{\partial\mathcal{L}}{\partial w_m} \cdot w_m\right)^2$. The importance of the $n$-th feature map is  defined to be the summed importance of parameters in the $n$-th filter outputting the map: $g_{\sf c}(n) = \sum_{w_m~\text{in the $n$-th filter} } {\tilde{I}(m)}$.
Since $\big\{ \frac{\partial\mathcal{L}}{\partial w_m} \big\}$ are readily available in training, the server is able to obtain the value of $\{g_{\sf c}(n)\}$ after training $f_{\sf d}(\cdot)$ and form a lookup table for reference during ProgressFTX control. Given $\{g_{\sf c}(n)\}$, the importance-aware feature selection for each slot is performed by selecting a given number of most important filters, whose feature maps will be transmitted in the slot. It should be reiterated that the number of selected feature maps is communication-rate dependent and may varies over slots. Algorithm~\ref{algorithm: feature_selection}  for the linear classifier can be modified accordingly. The details are straightforward, and thus omitted.

\subsection{Stopping Control Based on Uncertainty Prediction}
\label{subsec: cnn_stopping}

Consider the step of stopping control in the ProgressFTX protocol in Section~\ref{subsec: protocol}. The optimal stopping control in the current case is stymied by the difficulty in finding a tractable and accurate approximation of the expected classification uncertainty as a function of input features similarly to Lemma~\ref{lemma: exponential_bound} for linear classification. To tackle the challenge, we resort to an algorithmic approach in which a regression  model is trained to predict the uncertainty function of feature maps to be transmitted in the following $K$ slots given the feature maps already received by the server. One particular popular architecture of regression models in the literature is adopted \cite{SpechtTNN91,Gao2019NeuralComput}. To be able to predict the uncertainty for a future partial feature-map tensor, both the current partial feature-map tensor and the selected index subset are needed as the input. Since the underpinning prediction is based on the current partial feature maps, ProgressFTX control for CNN models is also sample-dependent. The architecture contains concatenation of two streams of regression features extracted from  the feature-map tensor, $\mathbf{X}_{k}$, and the index subset $\mathcal{S}_{k}$, into one intermediate feature map, fed into the deeper layers of the regression model (see Fig.~\ref{fig: regression_model}).
 
To train the regression model, a training dataset and a prediction loss function need to be properly designed.  The training dataset on the server, denoted as $\mathcal{D}$,  comprises labeled samples $\{\mathbf{D}_{i},H_{i}\}$, each with the a tuple of $\mathbf{D}_{i}=(\mathbf{X}_{(i)}, \mathcal{S}_{(i)})$ and a scalar label $H_{i}$, $i=1,2,...,|\mathcal{D}|$. Consider a tensor of all feature maps extracted by the server  sub-model  $f_{\sf d}(\cdot)$ from an arbitrary sample and denoted by $\mathbf{X}$. The first entry in $\mathbf{D}_{i}$, $\mathbf{X}_{(i)}$, is a tensor of arbitrary partial feature maps drawn from $\mathbf{X}$, which represent the feature maps already received by the server. The second entry $\mathcal{S}_{(i)}$ is an admissible subset of  indexes representing feature maps to be transmitted (hence not in $\mathbf{X}_{(i)}$). Then the label $H_{i}$ is the exact inference uncertainty generated by the server sub-model $f_{\sf s}(\cdot)$ for the input of a tensor $\mathbf{X}_{(i)}^{\prime}$ comprising both feature maps in $\mathbf{X}_{(i)}$ and the feature maps indexed in $\mathcal{S}_{(i)}$ drawn from $\mathbf{X}$, i.e., $H_{i}=-\mathbb{E}_{\ell}\left[\Pr{\left(\ell|\mathbf{X}_{(i)}^{\prime}\right)}\log\Pr{\left(\ell|\mathbf{X}_{(i)}^{\prime}\right)}\right]$.  Next, to train the prediction model, the loss function is designed  to be the mean-square error between the predicted result  and the ground-truth. A \emph{stochastic gradient descent} (SGD) optimizer is adopted to train the model, denoted as $f_{\sf r}(\cdot\vert \mathbf{W}_{\sf r})$ with parameters $\mathbf{W}_{\sf r}$,  by minimizing  the loss function: 
\begin{equation}
\min_{\mathbf{W}_{\sf r}} \frac{1}{|\mathcal{D}|}\sum_{i=1}^{|\mathcal{D}|}\left[ f_{\sf r}(\mathbf{D}_{i}|\mathbf{W}_{\sf r}) - H_{i} \right]^2.
\end{equation}
Given the current state $\theta_1$ and  the trained inference uncertainty predictor $f_{\sf r}(\cdot|\mathbf{W}_{\sf r})$, the online stopping control problem is written as 
\begin{equation}
\label{eqn: stopping_control_cnn}
\min\limits_{k^{*} \in \{0,1,...,K\} } \ f_{\sf r}\left( \left(\mathbf{X}_1,\bigcup_{k=1}^{k^{*}}\mathcal{S}_{k}^{\star}\right) \bigg| \mathbf{W}_{\sf r}\right) + c_0  k^{*}.
\end{equation}
The optimal stopping time from the current slot, namely  $k^{\star}=\!\!\mathop{\arg\min}\limits_{k^{*}\in\{0,1,..,K\}} f_{\sf r}\left(\! \left(\mathbf{X}_1,\bigcup\limits_{k=1}^{k^{*}}\mathcal{S}_{k}^{\star}\right) \bigg| \mathbf{W}_{\sf r}\right) + c_0  k^{*}$,  can be determined by linear search. Then  the stopping decision in the current slot   is $b_1^{\star}=\min\{1,k^{\star}\}$. Even for a Gaussian channel, the number of transmission slots for different samples differ due to their requirements of feeding different numbers of features into classifier to reach the same target confidence level if it is possible.

\begin{figure}
    \centering
    \includegraphics[height=3cm]{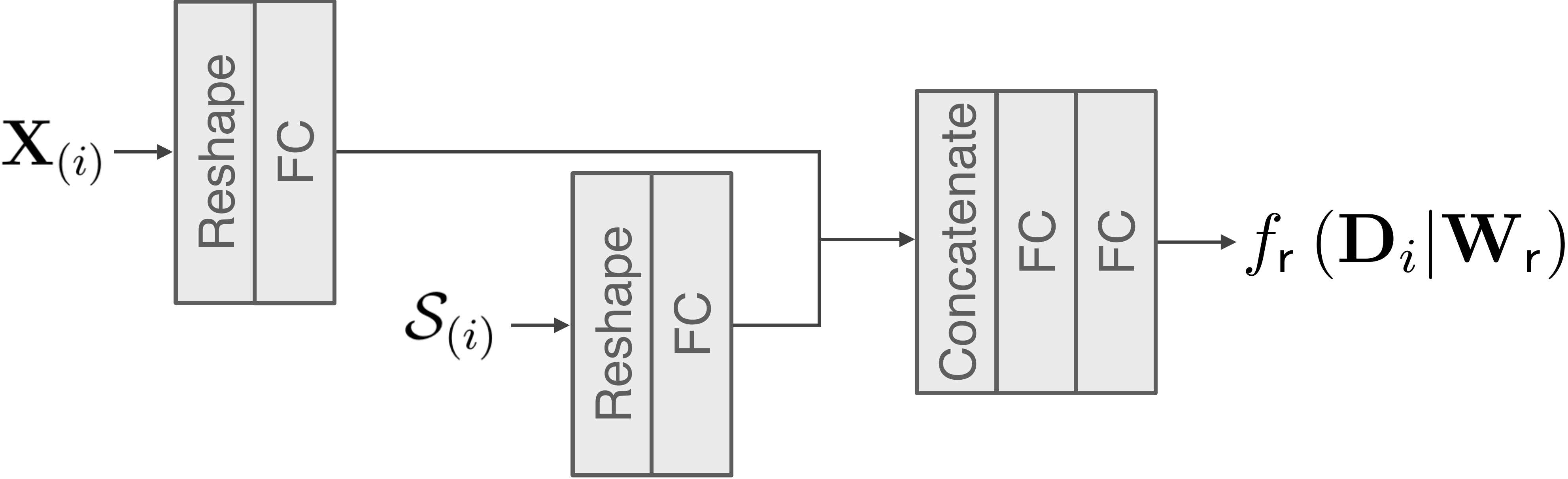}\vspace{-4mm}
    \caption{Illustration of the architecture for regression models.} 
    \label{fig: regression_model}\vspace{-5mm}
\end{figure}

\section{Experimental Results} \label{sec:  experiments}

\subsection{Experimental Settings}
The experimental setups are designed as follows, unless specified otherwise.  Each feature is quantized at a high resolution,  $Q=64$ bits/feature,  for digital transmission. The horizon in online scheduling is set as $K=5$ slots. A statistical (GM) dataset is used for training and testing the linear model and the popular MNIST  dataset of handwritten digits  for the CNN model. The  settings in  the two cases are described as follows. Consider the case of linear classification. For the GM dataset, there  are $L=2$ classes with $N=40$ features in total for each sample. Due to the small number of features, only a  narrow-band  channel is needed whose  bandwidth is set as $B=20$ kHz. The slot duration is  $T=10$ milliseconds. For the case of  Gaussian channel, the  channel SNR is set as $4$ dB such that the transmission rate is $Y_0=5$ features/slot. For the case of fading channel, the transmit rate is fixed at $Y_0=5$ features/slot while the outage probability is $p_{\sf o}=0.1$. For the case of  CNN, we use the well-known LeNet \cite{Molchanov2019CVPR}. The Gaussian channel in this case has a  bandwidth of  $B=2.6$ MHz,  the slot duration of $T=10$ milliseconds, and the channel SNR  $=4$ dB. The corresponding transmission rate is $Y_0=4$ feature-maps/slot.

The optimal control of ProgressFTX in the case of CNN classification  requires an uncertainty predictor as designed in Section~\ref{sec: cnn_model}.  Its architecture comprises two input layers, as illustrated in Fig. \ref{fig: regression_model}. The first one reshapes an input tensor for partial feature maps into a $512\times 1$ vector. The second one reshapes the feature selection input into a $256\times 1$ vector, where the coefficient is $1$ if the corresponding feature map index is selected or $0$ otherwise. These two vectors are concatenated and then fed into three fully-connected  layers with $100$, $40$, and $10$ neurons, respectively. There is one neuron in the last output layer providing a prediction of uncertainty. The test mean-square error  of the uncertainty predictor trained for $50$ epochs is low to $0.1$.

Two benchmarking schemes are considered. The first scheme, termed \emph{one-shot compression}, uses the classic approach of model compression: given importance-aware feature selection, the number of features to transmit, $Y_0 k^{\star}$, is determined prior to transmission such that it meets an uncertainty requirement ${H_0}.$ The value of $k^{\star}$ is solved from $ \mathop{\mathrm{argmin}}_{k^{*}} \tilde{H}(0, \mathrm{G}^{\star}(\theta_1,k^{*})) \geq{H_0}$ for linear classifiers or $ \mathop{\mathrm{argmin}}_{k^{*}} f_{\sf r}\left(\left(\mathbf{0},\bigcup_{k=1}^{k^*}\mathcal{S}_k^{\star}\right)\big\vert \mathbf{W}_{\sf r}\right) \geq{H_0}$ for CNN classifiers, respectively. The drawback of this scheme is that its lack of ACK/NACK feedback as for ProgressFTX makes the device inept in minimizing the number of features to achieve an exact uncertainty level. In the current experiments, the device has to instead target the expected uncertainty approximated in~\eqref{eqn: tilde_H} for the case of linear classification or predict the uncertainty in the case of CNN classification. The second scheme, termed \emph{random-feature optimal stopping}, modifies the optimal ProgressFTX by removing feature importance  awareness and instead selecting features randomly.

\subsection{Linear Classification Case}

First, we evaluate the effect of importance-aware feature selection  on inference performance. To this end, discriminant gains of different  feature dimensions, which are arranged in a decreasing order,  are plotted in Fig.~\ref{fig: linear_intuition}(a). In Fig.~\ref{fig: linear_intuition}(b), the levels of inference accuracy   and uncertainty  are plotted against the number of received features arranged in the same order. One can observe the monotonic reduction of accuracy and growth of uncertainty as the number of features increases. 
The result confirms the usefulness of importance awareness in feature selection for shortening the communication duration given target inference accuracy (or  expected uncertainty). 

\begin{figure}[t]
\centering
\subfigure[Discriminant Gains]{\includegraphics[height=5.45cm]{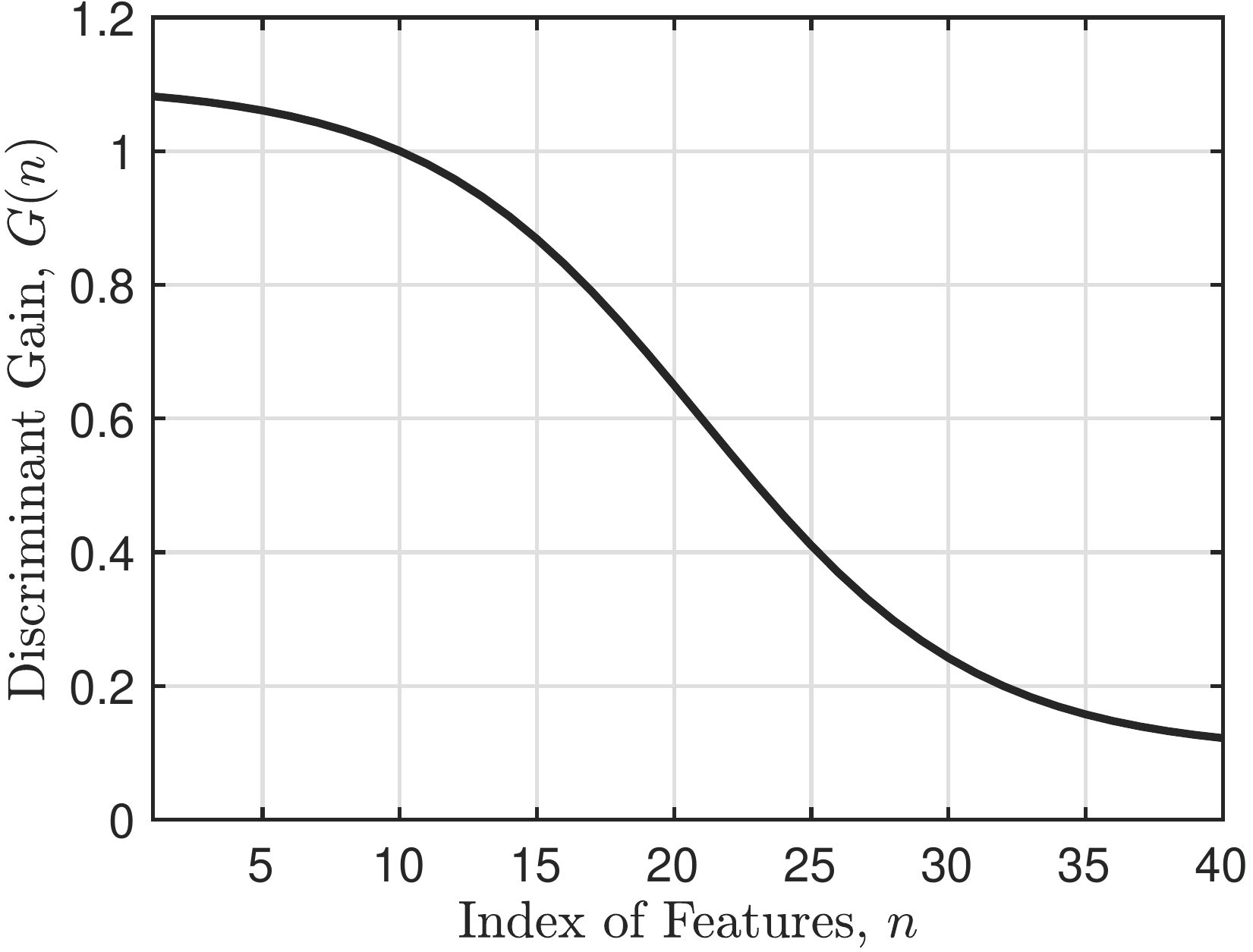}}
\hspace{0.2cm}
\subfigure[Inference Accuracy and Uncertainty]{\includegraphics[height=5.5cm]{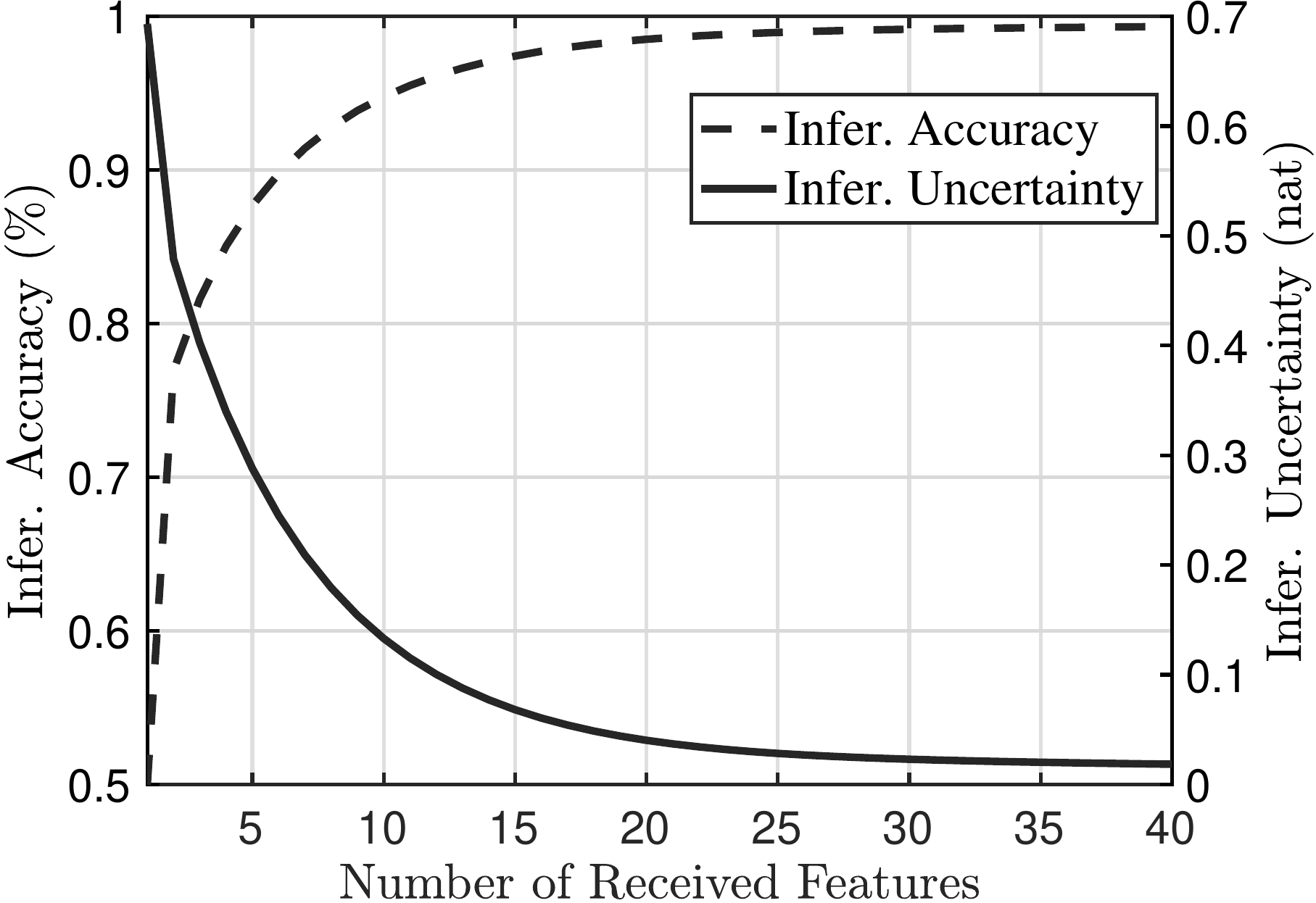}}
\vspace{-2mm}
\caption{{
(a) The discriminant gains of features and (b) their effects on inference performance (i.e., inference accuracy and uncertainty) in the case of linear classification.}}
\label{fig: linear_intuition}
\vspace{-4mm}
\end{figure}

Define the \emph{average communication latency}   of a transmission scheme for edge inference as the average number of transmission slots required for meeting a requirement on inference accuracy  or expected  uncertainty. Considering a Gaussian channel, the average communication latency of ProgressFTX and two benchmarking schemes are compared for varying target accuracy and  expected uncertainty  in Fig.~\ref{fig: linear_efficiency}. As observed from the figures, the proposed ProgressFTX technique achieves the lowest latency  based on both the criteria of achieving targeted uncertainty and  accuracy. For instance, two benchmarking schemes  require in average $2.2$ slots to achieve an accuracy of $95\%$  while ProgressFTX only requires $1.4$. In terms of  uncertainty, in average $2.8$ slots are used  to achieve uncertainty of $0.05$ for ProgressFTX. In comparison, the average latency  for one-shot compression and random-feature optimal stopping are about $39\%$ and $50\%$ higher, respectively. Moreover, ProgressFTX achieves the lowest average inference uncertainty of $0.018$ and the highest accuracy of $99\%$. The average communication latency of the three schemes are also compared 
in the case of a fading channel in  Fig.~\ref{fig: linear_efficiency_fading}. In the presence of fading, ProgressFTX continues to outperform  benchmarking schemes. The above comparisons demonstrate  the gains of importance awareness in feature selection and optimal stochastic control of transmission. 

\begin{figure}[t]
\centering
\subfigure[Inference Uncertainty]{\includegraphics[height=5.5cm]{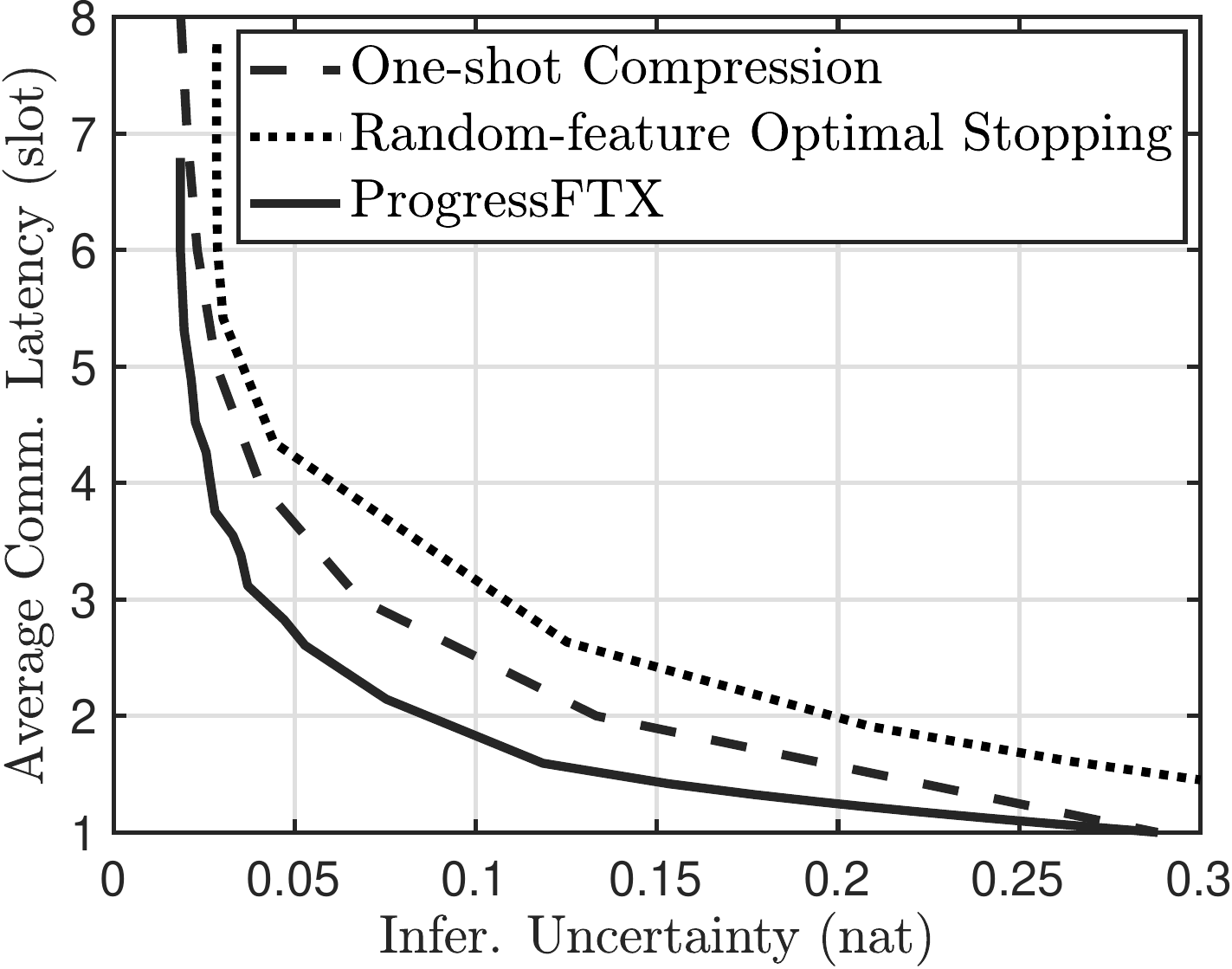}}
\hspace{1cm}
\subfigure[Inference Accuracy]{\includegraphics[height=5.5cm]{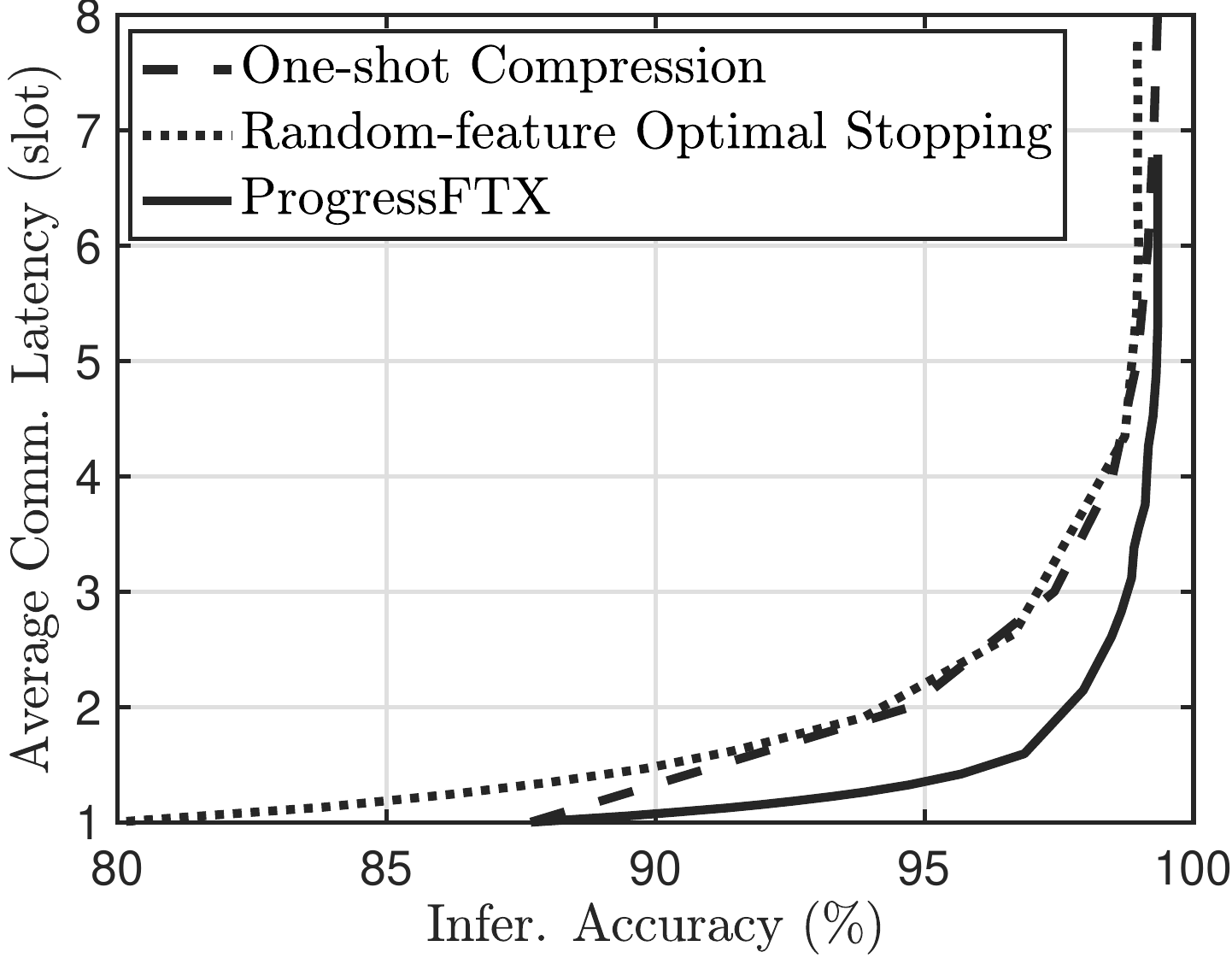}}
\vspace{-2mm}
\caption{ Comparison of average communication latency between ProgressFTX and benchmarking schemes in the case of linear classification and Gaussian channel for (a)  inference uncertainty or (b) inference accuracy.}
\label{fig: linear_efficiency}
\end{figure}
\begin{figure}[t]
\centering
\subfigure[Inference Uncertainty]{\includegraphics[height=5.5cm]{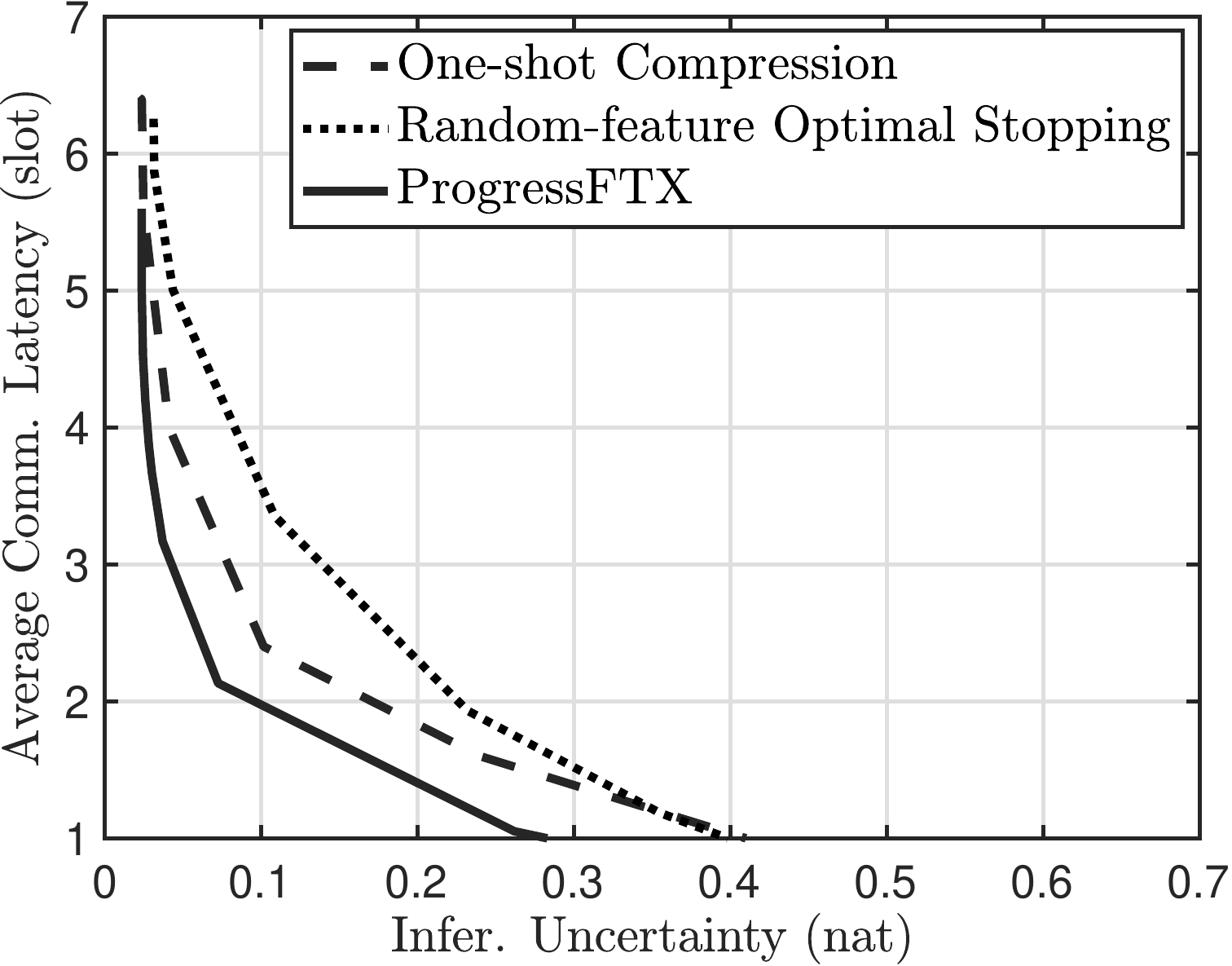}}
\hspace{1cm}
\subfigure[Inference Accuracy]{\includegraphics[height=5.5cm]{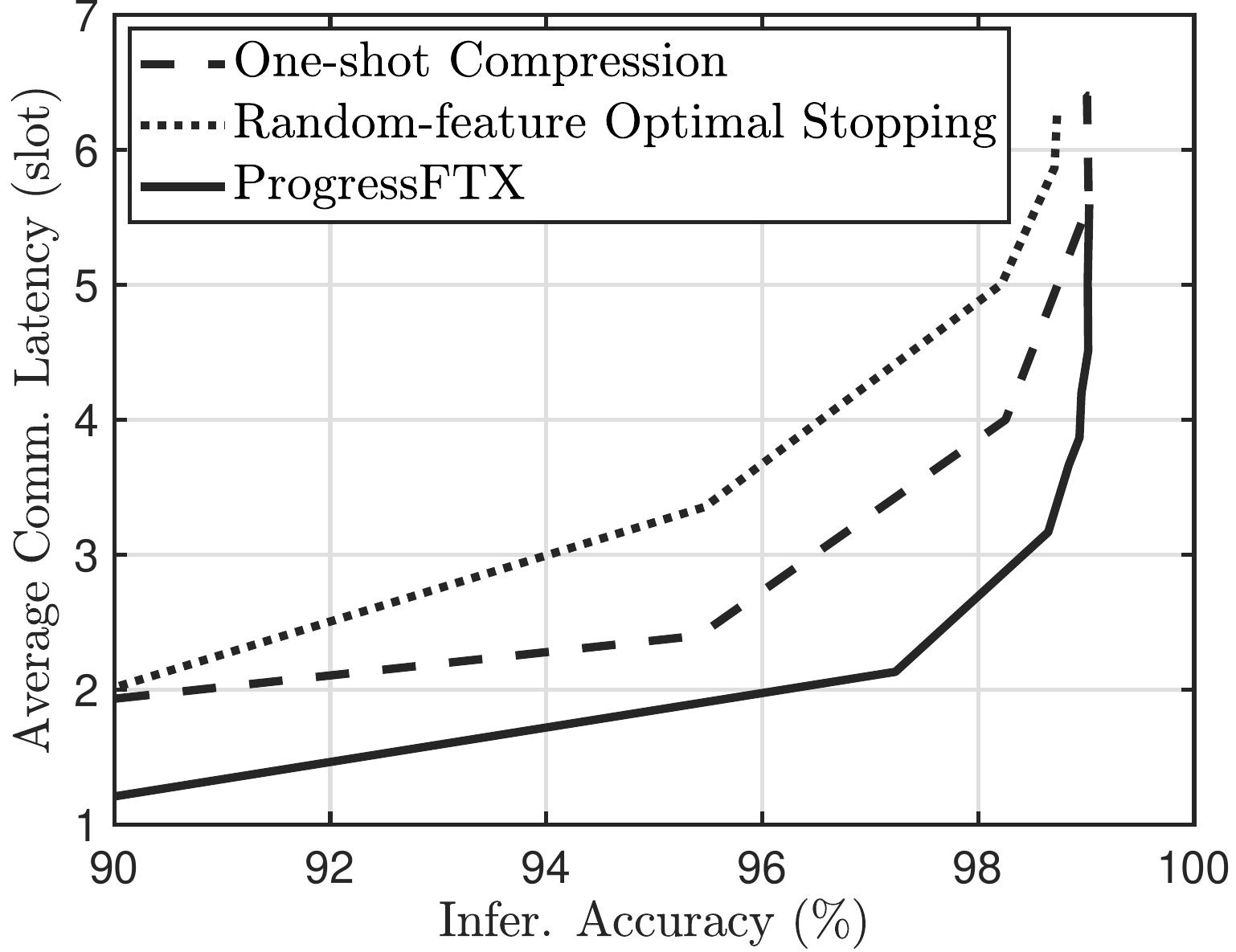}}
\vspace{-2mm}
\caption{Comparison of average communication latency between ProgressFTX and benchmarking schemes in the case of linear classification and fading channel with  varying  target (a) inference uncertainty or (b) inference accuracy.}
\label{fig: linear_efficiency_fading}
\vspace{-3mm}
\end{figure}

Define the \emph{transmission probability} of a feature dimension as the fraction of samples whose inference requires the transmission of the feature in this dimension.  To further compare ProgressFTX with  benchmarking schemes, the curves of  transmission probability of each feature dimension versus its importance level (i.e., discriminant gain) are plotted in Fig.~\ref{fig: linear_histogram} for the case of linear classification and Gaussian  channel. One can observe that the transmission probability is almost \emph{uniform}  over unpruned  feature dimensions for the scheme of one-shot compression and all dimensions for the scheme of random-feature optimal stopping. This indicates their lack of feature importance awareness. In contrast, the probabilities for ProgressFTX are highly skew with higher  probabilities for more important feature dimensions and vice versa. The skewness arises from the importance-aware feature selection as well as the stochastic control of transmissions which are the key reasons for the performance gain of ProgressFTX over the benchmark schemes. 

\begin{figure}[t]
\centering
\includegraphics[width=0.98\textwidth]{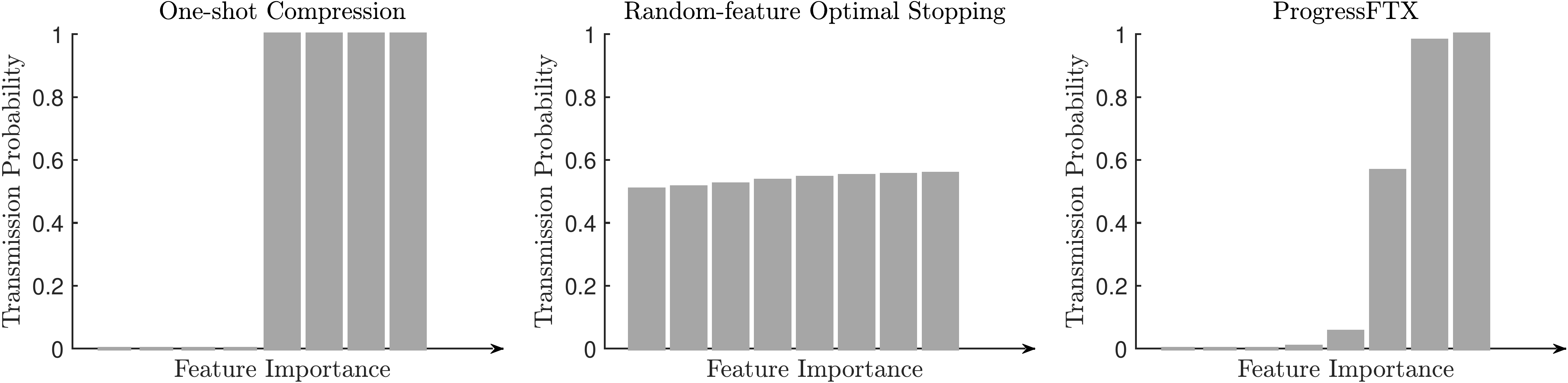}
\vspace{-5mm}
\caption{The transmission probabilities   of feature dimensions for the case of  linear classification and  Gaussian channel with  target inference accuracy of $98.5\%$. 
}
\label{fig: linear_histogram}
\vspace{-5mm}
\end{figure}

\subsection{CNN Classification Case}
\label{subsec: cnn_experiments}

\begin{figure}[t]
\centering
\includegraphics[height=5.45cm]{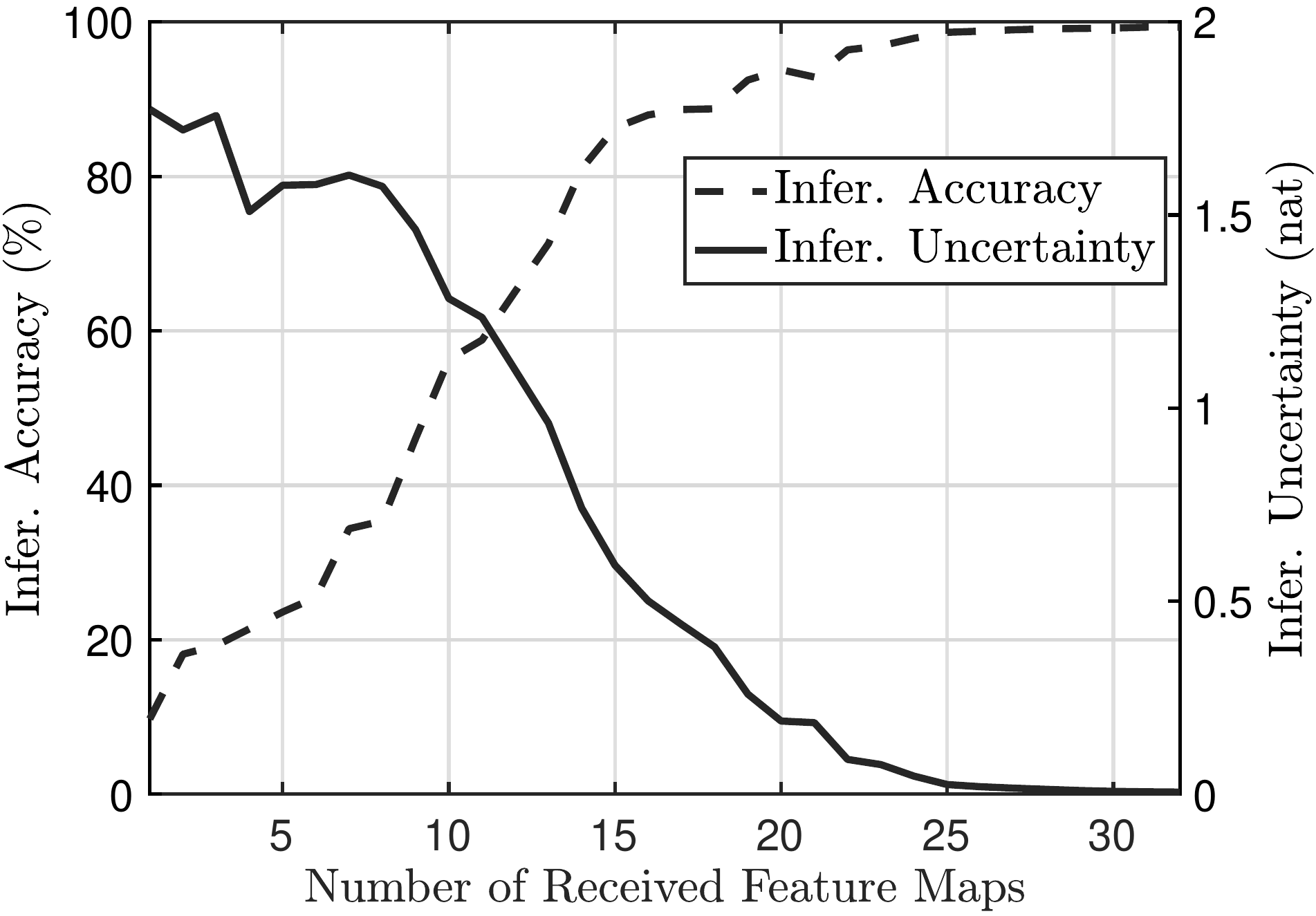}
\vspace{-6mm}
\caption{Effect of the number of received feature maps on CNN classification accuracy with importance-aware feature selection.}
\label{fig: cnn_intuition}
\vspace{-5mm}
\end{figure}

\begin{figure}[t]
\centering
\subfigure[Inference Uncertainty]{\includegraphics[height=5.6cm]{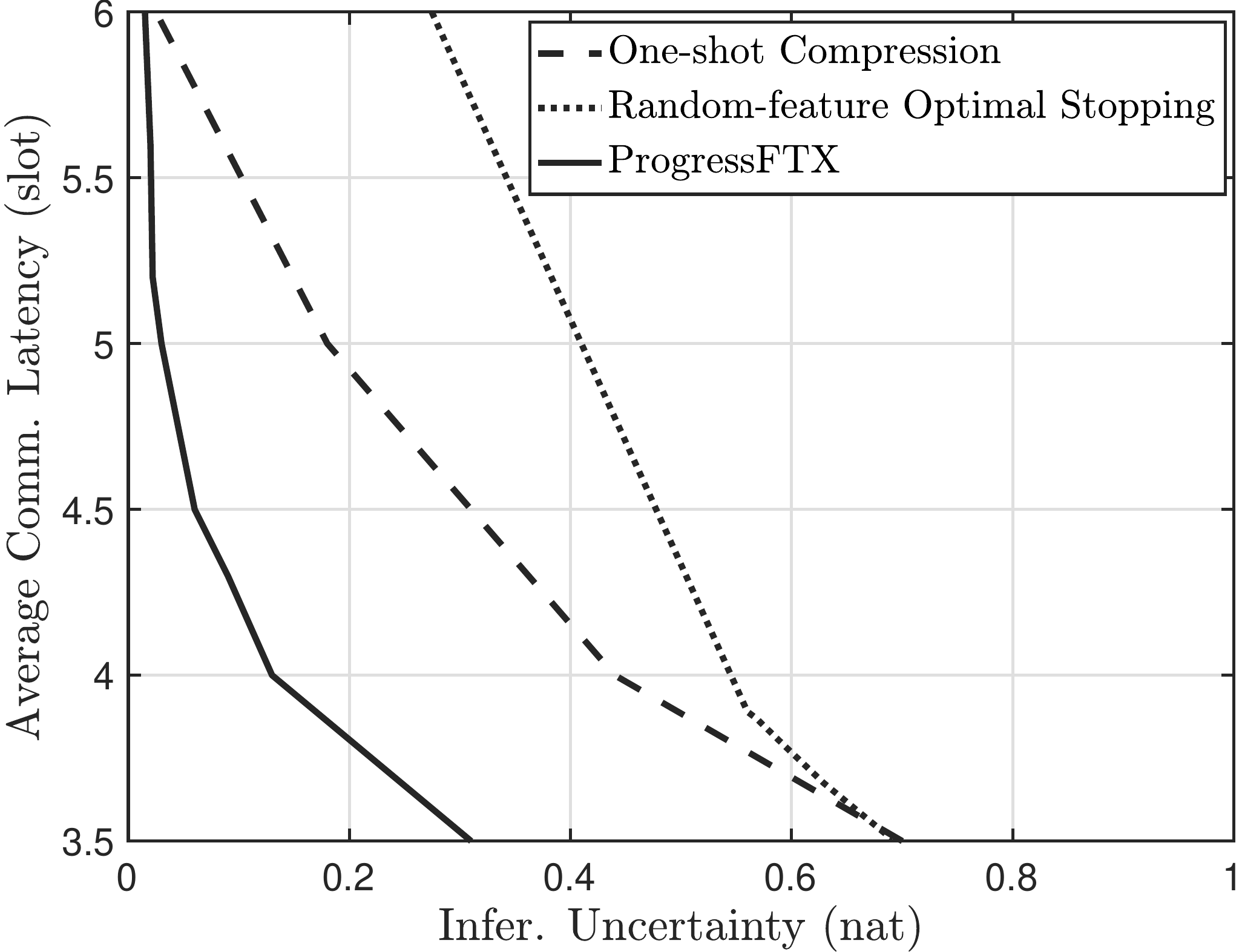}}
\hspace{1cm}
\subfigure[Inference Accuracy]{\includegraphics[height=5.6cm]{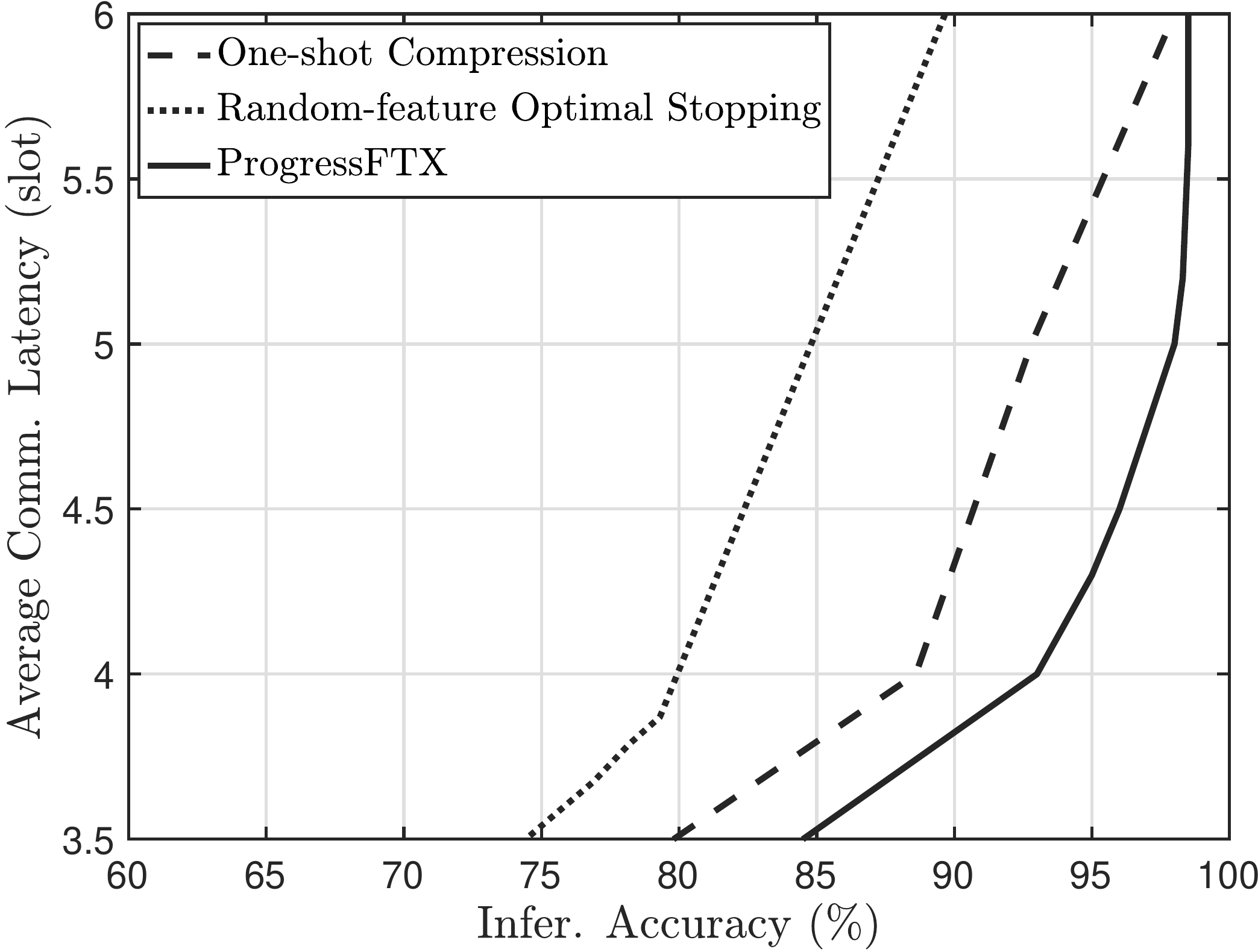}}
\vspace{-2mm}
\caption{Comparison of average communication latency  between ProgressFTX and benchmarking schemes for the case of CNN classification and  Gaussian channel  with 
varying  target (a) inference uncertainty or (b) inference accuracy.}
\label{fig: lenet_efficiency}
\vspace{-6mm}
\end{figure}

We consider the case of CNN classification where  ProgressFTX is controlled using the  algorithms developed in Section~\ref{sec: cnn_model}. To implement split inference, the split point of LeNet  is chosen to be right after the second CONV layer in the model. As a result, the device can choose from $32$ $4\times 4$ feature maps for transmission to the server. To rein in the complexity of model architecture and training, we consider a Gaussian channel as commonly assumed in the literature (see e.g.,~\cite{Niu2019Infocom,Deniz2020SPAWC}). Given these settings, one sample per inference task and $Y_0=4$ feature maps transmitted in one slot, at most $8$ slots are required to transmit all maps.

First, the curves of inference performance (i.e., inference accuracy and uncertainty) versus the number of transmitted feature maps, selected with importance awareness,  are plotted in Fig.~\ref{fig: cnn_intuition}. The tradeo-ffs are similar to those  in  the linear-classification case in Fig.~\ref{fig: linear_intuition}(b). Particularly, due to importance-aware feature selection, the accuracy is observed to rapidly grow and the uncertainty quickly reduces as the number of received feature maps increases.

The comparisons in Fig.~\ref{fig: linear_efficiency} is repeated in Fig.~\ref{fig: lenet_efficiency} for the CNN classifier. The same observation can be made that ProgressFTX outperforms the benchmark schemes over the considered ranges of inference uncertainty and accuracy. For instance, given $4$ transmission slots in average, ProgressFTX achieves the uncertainty of $0.13$ and accuracy of $93\%$ which are at least $65\%$ lower and $6.9\%$ higher than the benchmarking schemes. As another example, for the target accuracy of $93\%$, one can observe from   Fig.~\ref{fig: lenet_efficiency}(b) that ProgressFTX requires in average $4$ slots while one-shot compression requires  $5$, corresponding to  $20\%$  latency reduction for the former. 
Last, the transmission probabilities of different  feature maps are plotted against their importance levels  in Fig.~\ref{fig: lenet_insight}. The observations are similar to those for the linear model in Fig.~\ref{fig: linear_histogram}.

\begin{figure}[t]
    \centering
    \includegraphics[height=4cm]{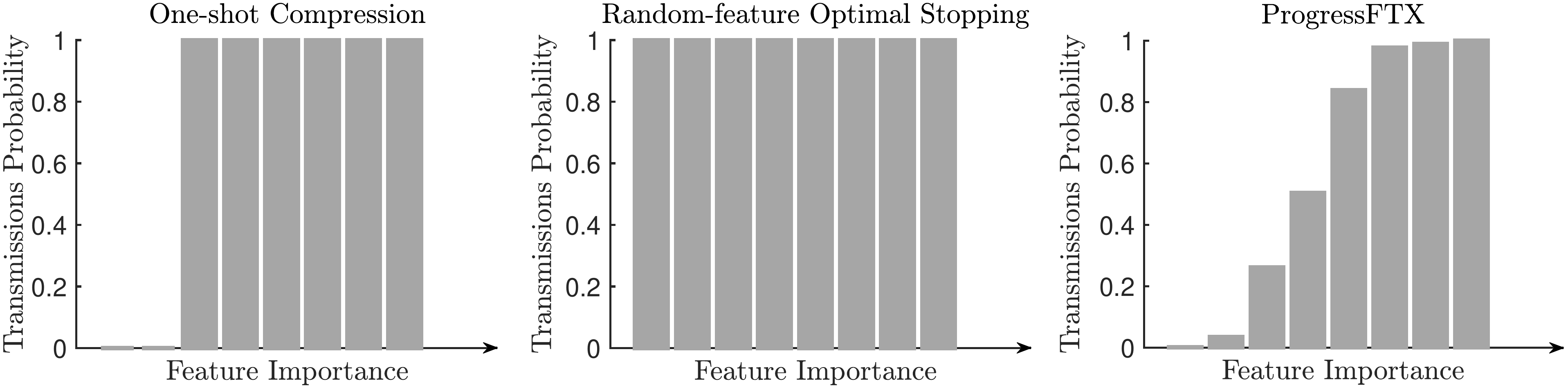}
    \vspace{-4mm}
    \caption{The transmission probabilities   of feature maps for the case of  CNN classification and  Gaussian channel with  target inference accuracy of $98.5\%$.}
    \label{fig: lenet_insight}
\end{figure}

\section{Concluding Remarks}
\label{sec: conclusions}
In this paper, a novel feature transmission technique termed ProgressFTX has been proposed for communication-efficient edge inference. ProgressFTX progressively selects and transmits important features until the desired inference performance level is reached, or terminates the transmission judiciously to improve the inference performance while reducing the transmission cost. Comprehensive experimental results demonstrate its benefits in reducing the communication latency under given inference performance targets. This first study of ProgressFTX opens a new direction to edge inference. A series of interesting topics for further study are warranted, such as power control and device scheduling for multi-device ProgressFTX.

\appendix
\subsection{Proof of Lemma~\ref{lemma: exponential_bound}}
\label{appendix: lemma_exponential_bound}
Substituting~\eqref{eqn: analysis_binary_pdf_delta} and~\eqref{eqn: h_ub_definition} into~\eqref{eqn: ub_tight_binary}, one can observe that the targeted integral consists of a group of integrals of Gaussian and exponential functions. This kind of integral is well-known in the literature that not to tend itself for a closed-form expression but can be expressed in terms of multiple \emph{complementary error functions}~\cite{Olabiyi2012WCL,Chiani2003TWC}. 
Due to limited space, we do not present the expression of $\bar{H}^{\sf ub}(\delta_1,G(\mathcal{S}))$
but directly report a fact derived from the expression, 
$\lim\limits_{G(\mathcal{S})\to \infty}\frac{\bar{H}^{\sf ub}(\delta_1,G(\mathcal{S}))}{{\left(e^{\frac{\delta_1}{2}}+e^{-\frac{\delta_1}{2}}\right)}\left(G(\mathcal{S})\right)^{-\frac{1}{2}}e^{-\frac{G(\mathcal{S})}{8}} } = \frac{16}{9}\sqrt{\frac{2}{\pi}}$.
This results in $\bar{H}^{\sf ub}(\delta_1,G(\mathcal{S}))=O\left({\left(e^{\frac{\delta_1}{2}}+e^{-\frac{\delta_1}{2}}\right)}\left(G(\mathcal{S})\right)^{-\frac{1}{2}}e^{-\frac{G(\mathcal{S})}{8}}\right)$, where $O(\cdot)$ is the Bachmann–Landau notation establishing equivalence between shrinking rates of two functions. Consequently, given any $\delta_1$, $\bar{H}^{\sf ub}(\delta_1,G(\mathcal{S}))$ \emph{monotonically} decays to zero at a \emph{super-exponential} rate in the asymptotic regime of $G(\mathcal{S})$. This scaling law proofs Lemma~\ref{lemma: exponential_bound}. 

\subsection{Proof of Convexity of Problem (P5)}
\label{app: proof_convex_stopping_fading}

To show the convexity of Problem (P5), it is equivalent to prove the following inequality holds for $k^{*}\geq1$, that is
$
\Phi(\theta_1,k^{*}-1)+\Phi(\theta_1,k^{*}+1)-2\Phi(\theta_1,k^{*})\triangleq {\nabla^2 \Phi(\theta_1,k^{*})} \geq0,
$
where $\nabla^2 \Phi(\theta_1,k^{*})$ is the second-order difference of $\Phi(\theta_1,k^{*})$ w.r.t. its second argument. To begin with, we construct a function $\tilde{\Phi}(k^{*})$ defined as
$
\tilde{\Phi}(k^{*}) \triangleq \sum_{k^{\prime}=0}^{k^{*}}{ \left(l_1 k^{\prime} + l_2 \right) p(k^{\prime}|k^{*}) }, 
$
 where $l_1 k^{\prime} + l_2$ is an \emph{arbitrary} linear function with constants $l_1$ and $l_2$. The closed-form expression of $\tilde{\Phi}(k^{*})$ is evaluated to be  $\tilde{\Phi}(k^{*})=l_1 k^{\prime}(1-p_{\sf o}) + l_2$. Consequently, the second-order difference of $\tilde{\Phi}(k^{*})$ w.r.t. $k^{*}$ always equals to zero, i.e.,
\begin{equation}
    \nabla^2\tilde{\Phi}(k^{*})=0, \ \forall k^{*}. \label{eqn: diff2_zero}
\end{equation}
The analytical form of the second-order difference of PMF w.r.t. $k^{*}$, which is denoted as $\nabla^2 p(k^{\prime}|k^{*})=p(k^{\prime}|k^{*}-1) + p(k^{\prime}|k^{*}-1) - 2p(k^{\prime}|k^{*})$, is given by
\begin{eqnarray}
\nabla^2 p(k^{\prime}|k^{*})
&=& p_{\sf o}^{k^{*}-k^{\prime}}(1-p_{\sf o})^{k^{\prime}}\frac{\prod_{i=k^{*}-k^{\prime}+2}^{k^{*}-1}i}{ \prod_{i=1}^{k^{\prime}}i} \nonumber\\
&& \times\underbrace{\left[p_{\sf o}^{-1}(k^{*}-k^{\prime}+2)(k^{*}-k^{\prime})+p_{\sf o}k^{*}(k^{*}+1)-2k^{*}(k^{*}-k^{\prime}+1)\right]}_{\text{denoted as $\eta\left(k^{\prime}\right)$}}.    \label{eqn: diff2_pmf}
\end{eqnarray}
Observes from~\eqref{eqn: diff2_pmf} that the sign of $\nabla^2 p(k^{\prime}|k^{*}) $ is determined by $\eta\left(k^{\prime}\right)$, 
a convex quadrature function of $k^{\prime}$. Specifically, $\eta\left(0\right)\geq 0$ and $\eta\left(k^{*}\right)\geq 0$ hold. We conclude that given $k^{*}$, $\nabla^2 p(k^{\prime}|k^{*}) $ is non-positive only in one closed integral of \emph{consecutive} integers $k^{\prime}$, where the left and right end-points are denoted by $k_1>0$ and $k_2<k^{*}$, respectively. 

We expand $\nabla^2{\Phi}(\theta_1,k^{*})$ and $\nabla^2\tilde{\Phi}(k^{*})$ as
$
\nabla^2\Phi(\theta_1,k^{*}) = \sum_{k^{\prime}=0}^{k^{*}} \tilde{H}\left(\delta_1,\mathrm{G}^{\star}(\theta_1,k^{\prime})\right) \nabla^2 p(k^{\prime}|k^{*})
$
 and
$
\nabla^2\tilde{\Phi}(k^{*}) = \sum_{k^{\prime}=0}^{k^{*}} \left(l_1 k^{\prime} + l_2 \right) \nabla^2 p(k^{\prime}|k^{*}), 
$
respectively. 
Choose $l_1\leq 0$ and $l_2\geq 0$ such that $l_1 k^{\prime} + l_2 > 0$ if $k^{\prime}\geq k_1 $ and $l_1 k^{\prime} + l_2 \geq 0$ if $k^{\prime} > k_1 $. Due to $\tilde{H}\left(\delta_1,\mathrm{G}^{\star}(\theta_1,k^{\prime})\right)\geq 0$, one can conclude that 
\begin{equation}
    \label{ineq: diff2_compare_right}
    \sum_{k^{\prime}=k_2+1}^{k^{*}} \tilde{H}\left(\delta_1,\mathrm{G}^{\star}(\theta_1,k^{\prime})\right) \nabla^2 p(k^{\prime}|k^{*}) \geq 0 \geq \sum_{k^{\prime}=k_2+1}^{k^{*}} \tilde{H}\left(\delta_1,\mathrm{G}^{\star}(\theta_1,k^{\prime})\right) \nabla^2 p(k^{\prime}|k^{*}).
\end{equation}
Next, we note that $\tilde{H}\left(\delta_1,\mathrm{G}^{\star}(\theta_1,k^{\prime})\right)$ is strictly convex and monotonically decreasing w.r.t. $k^{\prime}$ while $l_1 k^{\prime} + l_2$ is linear. It is always able to find an $l_1 \leq 0$ with sufficiently small $\vert l_1 \vert$ such that 
\begin{equation}
    \sum_{k^{\prime}=0}^{k_2} \tilde{H}\left(\delta_1,\mathrm{G}^{\star}(\theta_1,k^{\prime})\right) \nabla^2 p(k^{\prime}|k^{*}) \geq \sum_{k^{\prime}=0}^{k_2} \tilde{H}\left(\delta_1,\mathrm{G}^{\star}(\theta_1,k^{\prime})\right) \nabla^2 p(k^{\prime}|k^{*}) \geq 0.
    \label{ineq: diff2_compare_left}
\end{equation}
Combining \eqref{eqn: diff2_zero},~\eqref{ineq: diff2_compare_right} and~\eqref{ineq: diff2_compare_left} leads to the desired results, which completes the proof.
\begin{eqnarray}
    \nabla^2\Phi(\theta_1,k^{*}) \geq \nabla^2\tilde{\Phi}(k^{*}) = 0.
\end{eqnarray}

\bibliographystyle{IEEEtran}

\end{document}